\font\tencyr=wncyr8
\def\cyr{\tencyr\cyracc}
\newcommand{\shift}{\mbox{}\hspace{4.5mm}}
\newcommand{\fix}[1]{\mathit{fix}(#1)}
\newcommand{\lin}[1]{\bar v(#1)}
\newcommand{\free}[1]{\tilde v(#1)}
\newcommand{\fixl}[1]{\mathit{fix}^-(#1)}
\newcommand{\fitl}[1]{\mathit{fit}^-(#1)}
\newcommand{\fitg}[1]{\mathit{fit}^\vee(#1)}
\newcommand{\cut}[1]{\bar f(#1)}
\newcommand{\ff}{{\bar f}}
\newcommand{\vv}{{\bar v}}
\newcommand{\plin}[1]{\stackrel{=}{v}(#1)}
\newcommand{\pcute}[1]{\stackrel{=}{e}(#1)}
\newcommand{\fit}[1]{\mathit{fit}(#1)}
\newcommand{\fixx}[1]{\mathit{fix}^X(#1)}
\newcommand{\fitx}[1]{\mathit{fit}^X(#1)}
\newcommand{\foldn}[1]{{\sc Fold}$^{#1}$}
\newcommand{\fold}[2]{{\sc Fold}$^{#1}\,(#2)$}
\newcommand{\cross}[1]{\mathrm{cr}(#1)}
\newcommand{\reals}{\mathbb{R}}
\newcommand{\function}[2]{:#1 \rightarrow #2}
\newcommand{\setdef}[2]{\left\{ \hspace{0.5mm} #1 : \hspace{0.5mm} #2 \right\}}
\newcommand{\refeq}[1]{(\ref{eq:#1})}
\newcommand{\hide}[1]{}
\newcommand{\edit}[1]{}
\newtheorem{theorem}{Theorem}[section]
\newtheorem{lemma}[theorem]{Lemma}
\newtheorem{corollary}[theorem]{Corollary}
\newtheorem{remark}[theorem]{Remark}
\newenvironment{proofof}[1]{\par\smallbreak\noindent{\it Proof~of~#1.}}%
{\unskip\nobreak\hfill \qed \par\medbreak}
\newcommand{\Case}[2]{\smallskip\par{\it Case #1:\/ #2}}
\newcommand{\Subcase}[2]{\smallskip\par{\it Subcase #1:\/ #2}}
\newenvironment{bfenumerate}%
{%
\begin{enumerate}}{\end{enumerate}}
\newcounter{oq}
\newcommand{\que}{\refstepcounter{oq}\par{\sc \theoq.}~}
\title{On collinear sets in straight line drawings}
\author{Alexander Ravsky\,$^*$}\thanks{$^*$\,%
Institute for Applied Problems of Mechanics and Mathematics,
Naukova St.\ 3-{\cyr B}, Lviv 79060, Ukraine.}
\author{Oleg Verbitsky\,$^*$$^\dag$}\thanks{$^\dag$\,%
Current address: Humboldt-Universit\"at zu Berlin,
Institut f\"ur Informatik, Unter den Linden 6,
D-10099 Berlin. Supported by the Alexander von Humboldt Foundation.}
\date{}
\begin{document}

\begin{abstract}
We consider straight line drawings of a planar graph $G$ with possible edge crossings.
The \emph{untangling problem} is to eliminate all edge crossings by moving as few
vertices as possible to new positions.
Let $\fix G$ denote the maximum number of vertices that can be left fixed
in the worst case.
In the \emph{allocation problem}, we are given
a planar graph $G$ on $n$ vertices together with an $n$-point set $X$
in the plane and have to draw $G$ without edge crossings so that
as many vertices as possible are located in $X$.
Let $\fit G$ denote the maximum number of points fitting this purpose
in the worst case.
As $\fix G\le\fit G$, we are interested in upper bounds for the latter
and lower bounds for the former parameter.

For each $\epsilon>0$, we construct an infinite sequence of graphs 
with $\fit G=O(n^{\sigma+\epsilon})$,
where $\sigma<0.99$ is a known graph-theoretic constant, namely the shortness
exponent for the class of cubic polyhedral graphs. 
To the best of our knowledge, this is the first example of graphs
with $\fit G=o(n)$.
On the other hand, we prove that $\fix G\ge\sqrt{n/30}$ for all $G$ with tree-width
at most 2. This extends the lower bound obtained by 
Goaoc et al.\ [{\it Discrete and Computational Geometry\/} 42:542--569 (2009)]
for outerplanar graphs.

Our upper bound for $\fit G$ is based on the fact that the constructed graphs
can have only few collinear vertices in any crossing-free drawing.
To prove the lower bound for $\fix G$, we show that graphs of tree-width 2
admit drawings that have large sets of collinear vertices with some additional
special properties.
\end{abstract}

\maketitle
\markleft{\sc ALEXANDER RAVSKY and OLEG VERBITSKY}

\section{Introduction}

\subsection{Basic definitions}

Let $G$ be a planar graph. The vertex set of $G$ will be denoted by $V_G$.
The letter $n$ will be reserved to always denote the number of vertices in $V_G$.
By a \emph{drawing} of $G$ we mean an arbitrary injective map
$\pi\function{V_G}{\reals^2}$.
The points in $\pi(V_G)$ will be referred to as \emph{vertices} of the drawing.
For an edge $uv$ of $G$,
the segment with endpoints $\pi(u)$ and $\pi(v)$ will be referred to as an \emph{edge}
of the drawing. Thus, we always consider \emph{straight-line} drawings.
It is quite possible that in $\pi$ we encounter edge crossings and even overlaps.
A drawing is \emph{plane} (or \emph{crossing-free}) if this does not happen.

Given a drawing $\pi$ of $G$ , define
$$
\fix{G,\pi}=\max_{\pi'\mathrm{\ plane}}|\setdef{v\in V_G}{\pi'(v)=\pi(v)}|.
$$
Given an $n$-point set $X$ in the plane, let
$$
\fixx G=\min_{\pi:\,\pi(V_G)=X}{\fix{G,\pi}}.
$$
Furthermore, we define
\begin{equation}\label{eq:deffix}
\fix G=\min_X\fixx G=\min_\pi\fix{G,\pi}.
\end{equation}
In other words, $\fix G$ is the maximum number of vertices
which can be fixed in any drawing of $G$ while \emph{untangling} it.

Given an $n$-point set $X$, consider now a related parameter
$$
\fitx G=\max_{\pi\mathrm{\ plane}}|\pi(V_G)\cap X|.
$$
In words, if we want to draw $G$ allocating its vertices at points of $X$, then
$\fitx G$ tells us how many points of $X$ can fit for this purpose.
To analyze the \emph{allocation} problem in the worst case, we define
$$
\fit G=\min_X\fitx G.
$$
Note that $\fitx G=\max_{\pi:\,\pi(V_G)=X}{\fix{G,\pi}}$.
It follows that $\fixx G\le\fitx G$ and, therefore,
$\fix G\le\fit G$.

\subsection{Known results on the untangling problem}

No efficient way for evaluating the parameter $\fix G$ is known.
Note that computing $\fix{G,\pi}$ is NP-hard \cite{merged,Ver}.
Essential efforts are needed to estimate $\fix G$ even for cycles, for which
we know bounds
$$
2^{-5/3}n^{2/3}-O(n^{1/3})\le\fix{C_n}\le O((n\log n)^{2/3})
$$
due to, respectively, Cibulka \cite{Cib} and Pach and Tardos \cite{PTa}.
In the general case Bose et al.~\cite{Bose} establish a lower bound
\begin{equation}\label{eq:bose}
\fix G\ge(n/3)^{1/4}.
\end{equation}
A better bound
\begin{equation}\label{eq:n2}
\fix G\ge\sqrt{n/2}
\end{equation} 
is proved for all trees (Bose et al.~\cite{Bose}, Goaoc et al.~\cite{merged}) 
and, more generally, outerplanar graphs (Goaoc et al.~\cite{merged}, cf.\ Corollary \ref{cor:outer} below). 

On the other hand, \cite{Bose,merged,KPRSV} provide examples of planar
graphs (even acyclic ones) with 
\begin{equation}\label{eq:fixupper}
\fix G=O(\sqrt n).
\end{equation}
In particular, for the fan graphs $F_n$ we have
\begin{equation}\label{eq:fixxFn}
\fixx{F_n}\le(2\sqrt2+o(1))\sqrt n\text{\ \ for\ every\ }X,
\end{equation}
see \cite{KPRSV}.
Cibulka \cite{Cib} establishes some general upper bounds, namely $\fix G=O(\sqrt n(\log n)^{3/2})$
for graphs whose maximum degree and diameter are bounded by a logarithmic function and
$\fix{G}=O((n\log n)^{2/3})$ for 3-connected graphs.

\subsection{Known results on the allocation problem}

The question whether or not $\fitx G=n$ has been studied in the literature,
especially for $X$ in general position.
If $X$ is in convex position, any triangulation on $X$ is outerplanar.
By this reason, $\fitx G<n$ for all non-outerplanar graphs $G$ and all
sets $X$ in convex position.
On the other hand, in \cite{GritzmannMPP91} it is proved that
$\fitx G=n$ for all outerplanar $G$ and all $X$ in general position.
Other results and references on this subject can be found, e.g., in~\cite{GarciaHHTV09}.

It is known that there are $\Omega(27.22^n)$
unlabeled planar graphs with $n$ vertices \cite{GimenezN09}, while a set of $n$ points in convex position
admits no more than $O(11.66^n)$ plain drawings \cite{FlajoletN99}. Combining the two results, we see
that, if $X$ is in convex position, then $\fitx G<n$ for almost all planar~$G$.

\subsection{Our present contribution}

We aim at proving upper bounds for $\fit G$ and lower bounds for $\fix G$.
Our approach to both problems is based on analysis of collinear sets of vertices
in straight line graph drawings. We show the relevance of the following questions.
How many collinear vertices can occur in a plane drawing of a graph $G$?
If there is a large collinear set, which useful features can it have?

Suppose that $\pi$ is a crossing-free drawing of a graph $G$.
A set of vertices $S\subseteq \pi(V_G)$ in $\pi$ is \emph{collinear} if all of them
lie on a line $\ell$. By a \emph{conformal displacement} of $S$
we mean a relocation $\delta\function S\ell$ preserving the relative order in which
the vertices in $S$ lie in $\ell$. We call $S$ \emph{free} if every
conformal displacement $\delta\function S\ell$ is extendable to a mapping
$\delta\function{\pi(V_G)}{\reals^2}$ so that $\delta\circ\pi$ is a crossing-free
drawing of $G$ (i.e., whenever we shift vertices in $S$ along $\ell$ without breaking
their relative order, then all edge crossings that may arise can be eliminated by subsequently
moving the vertices in $\pi(V_G)\setminus S$).
Let $\lin{G,\pi}$ denote the maximum size of a collinear set in $\pi$ and
$\free{G,\pi}$ the maximum size of a free collinear set in $\pi$.
Define
$$
\lin{G}=\max_{\pi\mathrm{\ plane}}\lin{G,\pi}\text{\ \ and\ \ }
\free{G}=\max_{\pi\mathrm{\ plane}}\free{G,\pi}.
$$
Obviously, $\free G\le\lin G$. These parameters have a direct relation to $\fix G$ and $\fit G$, namely
\begin{equation}\label{eq:freefixlin}
\sqrt{\free G}\le\fix G\le\fit G\le\lin G.
\end{equation}
The latter inequality follows immediately from the definitions. The first
inequality is proved as Theorem \ref{thm:fixfree} below.

In Section \ref{s:lin} we construct,
for each $\epsilon>0$, an infinite sequence of graphs with $\lin G=O(n^{\sigma+\epsilon})$
where $\sigma\le\frac{\log22}{\log23}$ is a known graph-theoretic constant, namely the shortness
exponent for the class of cubic polyhedral graphs (see Section \ref{s:prel} for the definition). 
To the best of our knowledge, this gives us the first example of graphs with
$\fit G=o(n)$. While the known upper bounds \refeq{fixupper} for $\fix G$ are still better,
note that the problems of bounding $\fix G$ and $\fit G$ from above are inequivalent.
The two parameters can be far away from one another: 
for example, in contrast with \refeq{fixxFn} we have $\fitx{F_n}\ge n-1$ for any~$X$.

By the lower bound in \refeq{freefixlin}, we have $\fix G=\Omega(\sqrt n)$ whenever
$\free G=\Omega(n)$. Therefore, identification of classes of planar graphs with linear $\free G$
is of big interest. In Section \ref{s:free} we show that
$\free G\ge n/2$ for every outerplanar graph $G$.
This gives us another proof of the bound $\fix G\ge\sqrt{n/2}$ proved for
outerplanar graphs by Goaoc et al.~\cite{merged}\footnote{%
A preliminary version of \cite{merged} gave a somewhat worse bound of $\fix G\ge\sqrt{n/3}$.
An improvement to $\sqrt{n/2}$ was made in the early version of the present paper
independently of~\cite{merged}.}.

Furthermore, we consider the broader class of graphs with tree-width at most 2.
It coincides with the class of partial 2-trees and contains also all series-parallel graphs
(see, e.g.\ \cite[Sect.\ 8.3]{Bodlaender98}). For any graph $G$ in this class, we prove
that $\free G\ge n/30$ and, therefore, $\fix G\ge\sqrt{n/30}$.
The proof of this result takes Section \ref{s:2trees}.
Note that the sublinear upper bound for $\lin G$ is established in Section \ref{s:lin}
for a sequence of graphs whose tree-width is bounded by a constant.
We conclude with the discussion of open problems in Section~\ref{s:open}.

\section{Preliminaries}\label{s:prel}

Given a planar graph $G$, we denote the number of vertices, edges,
and faces in it, respectively, by $v(G)$, $e(G)$, and $f(G)$. The latter number
does not depend on a particular plane embedding of $G$ and hence is well defined.
Moreover, for connected $G$ we have
\begin{equation}\label{eq:euler}
v(G)-e(G)+f(G)=2
\end{equation}
by Euler's formula. 

A graph is \emph{$k$-connected} if it has more than $k$ vertices and 
stays connected after removal of any $k$ vertices. 
3-connected planar graphs are called \emph{polyhedral}
as, according to Steinitz's theorem, 
these graphs are exactly the 1-skeletons of convex polyhedra.
By Whitney's theorem, all plane embeddings of a polyhedral
graph $G$ are equivalent, that is, obtainable from one another by a plane homeomorphism
up to the choice of outer face. In particular, the set of facial cycles (i.e.,
boundaries of faces) of $G$ does not depend on a particular plane embedding.

A planar graph $G$ is \emph{maximal} if adding an edge between any two 
non-adjacent vertices of $G$ violates planarity. Maximal planar graphs 
on more than 3 vertices are
3-connected. Clearly, all facial cycles in such graphs have length 3.
By this reason maximal planar graphs are also called \emph{triangulations}.
Note that for every triangulation $G$ we have
$
3f(G)=2e(G)
$.
Combined with \refeq{euler}, this gives us
\begin{equation}\label{eq:fv}
f(G)=2\,v(G)-4.
\end{equation}

The \emph{dual} of a polyhedral graph $G$ is a graph $G^*$ whose
vertices are the faces of $G$ (represented by their facial cycles).
Two faces are adjacent in $G^*$ iff they share a common edge.
$G^*$ is also a polyhedral graph. If we consider $(G^*)^*$, we obtain
a graph isomorphic to $G$. In a \emph{cubic} graph every vertex is
incident to exactly 3 edges. As easily seen, the dual of a triangulation
is a cubic graph. Conversely, the dual of any cubic polyhedral graph 
is a triangulation.

The \emph{circumference} of a graph $G$, denoted by $c(G)$, is the
length of a longest cycle in $G$. The \emph{shortness exponent}
of a class of graphs $\mathcal G$ is the limit inferior of quotients $\log c(G)/\log v(G)$
over all $G\in\mathcal G$. Let $\sigma$ denote the shortness exponent
for the class of cubic polyhedral graphs. It is known that
$$
0.753<\sigma\le\frac{\log22}{\log23}=0.985\ldots
$$
(see \cite{BilinskiJMY11} for the lower bound and \cite[Theorem 7(iv)]{GWa} for the upper bound).

\section{Graphs with small collinear sets}\label{s:lin}

We here construct a sequence of triangulations $G$ with $\lin G=o(v(G))$.
For our analysis we will need another parameter of a straight line drawing.
Given a crossing-free drawing $\pi$ of a graph $G$,
let $\cut {G,\pi}$ denote the maximum number of collinear points in the plane
such that each of them is an inner point of some face of $\pi$
and no two of them are in the same face. Let $\cut G=\max_\pi\cut{G,\pi}$.
In other words, $\cut G$ is equal to the maximum number of faces in
some straight line drawing of $G$ whose interiors can be cut by a line.
Further on, saying that a line \emph{cuts} a face, we mean that the line
intersects the interior of this face.

For the triangulations constructed below, we will show that $\lin G$
is small with respect to $v(G)$ because $\cut G$ is small with respect
to $f(G)$ (though we do not know any relation between $\lin G$ and $\cut G$ in general). 
Our construction can be thought of as a recursive procedure
for essentially decreasing the ratio $\cut G/f(G)$ at each recursion step
provided that we initially have $\cut G<f(G)$.

\begin{figure}
\centerline{\scalebox{0.85}{\includegraphics{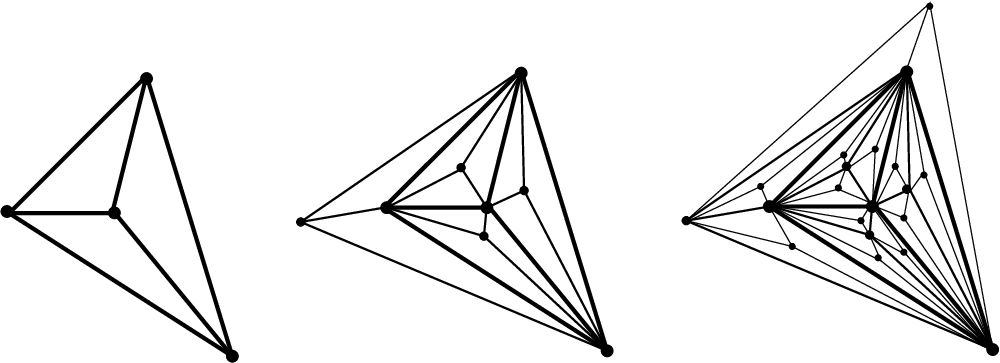}}}
\caption{An example of the construction: $G_1=K_4$, $G_2$, $G_3$.}
\label{fig:fromK4}
\end{figure}

Starting from an arbitrary triangulation $G_1$ with at least 4 vertices, 
we recursively define 
a sequence of triangulations $G_1,G_2,\ldots$. To define $G_k$, we will
describe a spherical drawing $\delta_k$ of this graph. Let $\delta_1$ be
an arbitrary drawing of $G_1$ on a sphere. Furthermore, $\delta_{i+1}$
is obtained from $\delta_i$ by triangulating each face of $\delta_i$ so
that this triangulation is isomorphic to $G_1$. An example is shown
on Fig.~\ref{fig:fromK4}. In general, upgrading $\delta_i$ to $\delta_{i+1}$
can be done in different ways, that may lead to non-isomorphic versions
of $G_{i+1}$. We make an arbitrary choice and fix the result.

\begin{lemma}\label{lem:Gk}
Denote $f=f(G_1)$, $\ff=\cut{G_1}$, and $\alpha=\displaystyle\frac{\log(\ff-1)}{\log(f-1)}$.
\begin{bfenumerate}
\item
$f(G_k)=f(f-1)^{k-1}$.
\item
$\cut{G_k}\le\ff(\ff-1)^{k-1}$.
\item
$\lin{G_k}<c\,v(G_k)^\alpha$, where $c$ is a constant depending only on~$G_1$.
\end{bfenumerate}
\end{lemma}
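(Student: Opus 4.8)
The plan is to establish the three parts in turn, parts 1 and 2 by induction on $k$ using the recursive description of $G_k$, and part 3 by relating $\lin{G_k}$ to $\cut{G_k}$. Part 1 is immediate from the construction: passing from $\delta_i$ to $\delta_{i+1}$ replaces each face of $G_i$ by a copy of $G_1$ whose outer triangle is identified with that face, so that one of the $f$ faces of $G_1$ plays the role of the surrounding triangle and the remaining $f-1$ become genuine faces inside it. Hence $f(G_{i+1})=(f-1)\,f(G_i)$, and with $f(G_1)=f$ this gives $f(G_k)=f(f-1)^{k-1}$. I record for later that the vertices and edges of $G_i$ survive every refinement, so $G_1\subseteq G_2\subseteq\cdots\subseteq G_k$ and the faces of $G_{i+1}$ refine those of $G_i$.

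For part 2 I fix an arbitrary crossing-free drawing $\pi$ of $G_{i+1}$ and a line $\ell$, and count the faces of $G_{i+1}$ met by $\ell$ grouped by the face of $G_i$ containing them. Restricting $\pi$ to $G_i$ gives a crossing-free drawing $\pi|_{G_i}$ whose faces are exactly the regions $T$ into which the copies of $G_1$ were inserted. The key local estimate is that \emph{for every crossing-free drawing of $G_1$ and every line, the number of bounded faces met is at most $\ff-1$}: a line escapes to infinity and hence always meets the unbounded face, so among the at most $\cut{G_1}=\ff$ faces it can cut, at least one is the unbounded one. Applied to the copy of $G_1$ filling a \emph{bounded} face $T$ of $G_i$ (whose unbounded face is the exterior of $T$), this shows that the portion of $\ell$ inside $T$ cuts at most $\ff-1$ faces of $G_{i+1}$. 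Summing over the faces $T$ met by $\ell$, whose number is at most $\cut{G_i,\pi|_{G_i}}\le\cut{G_i}$, yields $\cut{G_{i+1}}\le(\ff-1)\cut{G_i}$, and with $\cut{G_1}=\ff$ the induction gives $\cut{G_k}\le\ff(\ff-1)^{k-1}$. The main obstacle here is the single \emph{unbounded} face $T_\infty$ of $\pi|_{G_i}$: its copy of $G_1$ sits ``inside out'' with a bounded surrounding triangle, so the clean bound $\ff-1$ can fail. I would dispose of it by distinguishing whether $\ell$ enters the triangle bounding $T_\infty$. If it does, that triangle is cut and the same subtraction restores $\ff-1$; if it does not, then $\ell$ avoids the core entirely and meets only faces inside $T_\infty$, cutting at most $\ff\le\ff(\ff-1)^{i-1}$ in total. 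Either way the inductive bound is preserved.

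For part 3 I would first prove, for every triangulation $G$ on at least four vertices, the relation $\lin G\le\cut G+3$; this is false for planar graphs in general but holds here because the outer face of any straight-line drawing of a triangulation is itself a triangle. Consequently the whole drawing lies inside a convex triangle $xyz$, a line can contain at most two of its corners, and so all but at most two of the $t$ collinear vertices are interior and lie on a single chord of the triangle. Perturbing $\ell$ to a generic parallel line $\ell^+$, between consecutive interior collinear vertices the line $\ell^+$ must cross at least one bounded (hence convex) face, and these faces are pairwise distinct; this gives at least $t-3$ cut faces, so $\lin{G,\pi}\le\cut{G,\pi}+3$ and thus $\lin G\le\cut G+3$. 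It remains to combine parts 1 and 2 with Euler's relation \refeq{fv}. Writing $(\ff-1)^{k-1}=(f-1)^{(k-1)\alpha}$ by the definition of $\alpha$ and substituting $(f-1)^{k-1}=f(G_k)/f=(2v(G_k)-4)/f$, the bound $\cut{G_k}\le\ff(\ff-1)^{k-1}$ becomes $\cut{G_k}<\ff(2/f)^\alpha v(G_k)^\alpha$; adding $3$ yields $\lin{G_k}<c\,v(G_k)^\alpha$ with $c=\ff(2/f)^\alpha+3$ depending only on $G_1$.
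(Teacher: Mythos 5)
Your parts 1 and 2 are correct and follow essentially the paper's own argument: the same face recurrence, and the same two-case analysis for the cut bound (your distinction between $\ell$ entering or avoiding the disk enclosed by the outer triangle is exactly the paper's distinction between its quantities $\ff_i>1$ and $\ff_i=1$). One step you treat as obvious does need justification: that in an \emph{arbitrary} crossing-free straight-line drawing of $G_{i+1}$ the faces of the restriction to $G_i$ are precisely the regions filled by the inserted copies of $G_1$. This is a statement about all drawings, not just the constructed embedding $\delta_{i+1}$, and the paper derives it from Whitney's theorem on the uniqueness of plane embeddings of 3-connected graphs; you should invoke the same fact.

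Part 3 is where you genuinely diverge from the paper, and your route is sound. The paper never relates $\lin{G}$ and $\cut{G}$ directly (it even remarks that it knows no relation between them in general); instead it stratifies $V(G_k)$ by recursion level, notes that level-$i$ vertices lie strictly inside faces of $\pi_{i-1}$ so that at most $\ff_{i-1}(v-3)$ of them can lie on $\ell$, and sums the geometric series $\sum_i\ff_i$, ending with the constant $c=\frac{(v-3)\ff}{\ff-2}(2/f)^\alpha$. You instead prove the standalone inequality $\lin{G}\le\cut{G}+3$ for every triangulation: the outer face is a genuine convex triangle, so at most two collinear vertices are corners, and between consecutive interior collinear vertices a slightly perturbed parallel line must cut a bounded face; these faces are pairwise distinct because all faces are triangles (a triangle cannot have two collinear sides, a triangle with a side on $\ell$ has interior disjoint from $\ell$, and a convex face meets a line in a connected set, so it cannot straddle an intermediate vertex). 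This buys you something the paper does not have: a reusable fact showing that for triangulations $\lin{G}$ and $\cut{G}$ differ by an additive constant --- which partially answers the paper's parenthetical remark --- together with a one-line finish and the somewhat cleaner constant $c=\ff(2/f)^\alpha+3$. What the paper's route buys in exchange is that it needs no geometric perturbation or convexity argument at all and stays entirely within the recursive bookkeeping already set up for part 2.
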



\begin{proof}
The first part follows from the obvious recurrence
$$
f(G_{i+1})=f(G_i)(f-1).
$$
We have to prove the other two parts.

Consider an arbitrary crossing-free straight line drawing $\pi_k$ of $G_k$.
Recall that, by construction, $G_1,\ldots,G_{k-1}$ is a chain of subgraphs of $G_k$ with
$$
V_{G_1}\subset V_{G_2}\subset\ldots\subset V_{G_{k-1}}\subset V_{G_k}.
$$
Let $\pi_i$ be the part of $\pi_k$ which is a drawing of the subgraph $G_i$.
By the Whitney theorem, $\pi_k$ can be obtained from $\delta_k$ (the spherical
drawing defining $G_k$) by an appropriate stereographic projection of the sphere
to the plane combined with a homeomorphism of the plane onto itself.
It follows that, like $\delta_{i+1}$ and $\delta_i$, drawings $\pi_{i+1}$ and $\pi_i$
have the following property: the restriction of $\pi_{i+1}$ to any face of $\pi_i$
is a drawing of $G_1$.
Given a face $F$ of $\pi_i$, the restriction of $\pi_{i+1}$ to $F$ (i.e., a plane
graph isomorphic to $G_1$) will be denoted by $\pi_{i+1}[F]$. 

Consider now an arbitrary line $\ell$. Let $\ff_i$ denote the number of faces
in $\pi_i$ cut by $\ell$. By definition, we have
\begin{equation}\label{eq:f1f}
\ff_1\le\ff.
\end{equation}
For each $1\le i<k$, we have 
\begin{equation}\label{eq:ffi1}
\ff_{i+1}\le
\left\{
\begin{array}{lcl}
\ff&\mathrm{if}&\ff_i=1,\\[1mm]
\ff_i(\ff-1)&\mathrm{if}&\ff_i>1.
\end{array}
\right.
\end{equation}
Indeed, let $K$ denote the outer face of $\pi_i$.
Equality $\ff_i=1$ means that, of all faces of $\pi_i$, $\ell$ cuts only $K$.
Within $K$, $\ell$ can cut only faces of $\pi_{i+1}[K]$ and, therefore, $\ff_{i+1}\le\ff$.

Assume that $\ff_i>1$. Within $K$, $\ell$ can now cut at most $\ff-1$ faces of $\pi_{i+1}$
(because $\ell$ cuts $\reals^2\setminus K$, a face of $\pi_{i+1}[K]$ outside $K$).
Within any inner face $F$ of $\pi_i$, $\ell$ can cut
at most $\ff-1$ faces of $\pi_{i+1}$ (the subtrahend 1 corresponds to the outer face
of $\pi_{i+1}[F]$, which surely contributes to $\ff$ but
is outside $F$). The number of inner faces $F$ cut by $\ell$ is equal to
$\ff_i-1$ (again, the subtrahend 1 corresponds to the outer face of $\pi_i$).
We therefore have 
$\ff_{i+1}\le(\ff-1)+(\ff_i-1)(\ff-1)=\ff_i(\ff-1)$,
completing the proof of~\refeq{ffi1}.

Using \refeq{f1f} and \refeq{ffi1}, a simple inductive argument gives us
\begin{equation}\label{eq:need}
\ff_i\le\ff(\ff-1)^{i-1}
\end{equation}
for each $i\le k$. As $\pi_k$ and $\ell$ are arbitrary, part 2 of the lemma is proved
by setting $i=k$ in~\refeq{need}.

To prove part 3, we have to estimate from above $\vv=|\ell\cap V(\pi_k)|$,
the number of vertices of $\pi_k$ on the line $\ell$. Put $\vv_1=|\ell\cap V(\pi_1)|$
and $\vv_i=|\ell\cap (V(\pi_i)\setminus V(\pi_{i-1}))|$ for $1<i\le k$. Clearly,
$\vv=\sum_{i=1}^k\vv_i$. Abbreviate $v=v(G_1)$. It is easy to see that
$$
\vv_1\le v-2
$$
and, for all $1<i\le k$,
$$
\vv_i\le\ff_{i-1}(v-3).
$$
It follows that
\begin{equation}\label{eq:vv}
\vv\le(v-2)+(v-3)\sum_{i=1}^{k-1}\ff_i\le\frac{(v-3)\ff}{\ff-2}(\ff-1)^{k-1},
\end{equation}
where we use \refeq{need} for the latter estimate.
It remains to express the obtained bound in terms of $v(G_k)$.
By \refeq{fv} and by part 1 of the lemma, we have $(f-1)^{k-1}<2v(G_k)/f$ and, therefore,
$$
(\ff-1)^{k-1}=(f-1)^{\alpha(k-1)}<(2/f)^\alpha\, v(G_k)^\alpha.
$$
Plugging this in to \refeq{vv}, we arrive at the desired bound for $\vv$
and hence for $\vv(G_k)$.
\end{proof}

We now need an initial triangulation $G_1$ with $\cut{G_1}<f(G_1)$. The following lemma
shows a direction where one can seek for such triangulations.

\begin{lemma}\label{lem:duals}
For every triangulation $G$ with more than 3 vertices, we have
$$\cut G\le c(G^*).$$
\end{lemma}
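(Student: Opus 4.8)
The plan is to read $\cut{G,\pi}$ as the maximum number of faces of $\pi$ that a single line $\ell$ can cut, and then to extract from the passage of $\ell$ through the drawing a cycle in the dual graph $G^*$ whose length equals exactly that number. Since $G$ is a triangulation on more than $3$ vertices it is $3$-connected, so $G^*$ is a simple polyhedral graph and $c(G^*)$ is well defined; the desired inequality $\cut G\le c(G^*)$ will then follow by maximizing over $\pi$.

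First I would fix a crossing-free straight line drawing $\pi$ together with a line $\ell$ cutting the maximal number $m=\cut{G,\pi}$ of faces. Because ``$\ell$ cuts the interior of a face $F$'' is an open condition on $\ell$, I would perturb $\ell$ into general position, so that it avoids every vertex of $\pi$ and contains no edge, without decreasing the number of faces it cuts; hence I may assume $m$ faces are cut by such an $\ell$. If $\ell$ misses the bounded region entirely it cuts only the outer face, so $m=1\le c(G^*)$ and we are done. Otherwise, note that the outer boundary of a triangulation is a triangle, so the union of the bounded faces is the interior of a convex triangle; thus $\ell$ meets this region in a single segment, and meets the unbounded face $O$ only in the two rays outside it. Along the segment $\ell$ passes consecutively through inner faces $F_1,\dots,F_t$, each of which is a triangle and hence convex, so $\ell$ cuts each of them in a single sub-segment, i.e.\ exactly once. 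The faces cut by $\ell$ are therefore precisely $O,F_1,\dots,F_t$, whence $m=t+1$.

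Next I would translate this into $G^*$. Each time $\ell$ crosses an edge it passes between the two distinct faces bordering that edge, which are adjacent in $G^*$; by general position these crossings occur at $t+1$ pairwise distinct edges $e_0,\dots,e_t$. Reading off the faces in order yields a closed walk $O^*,F_1^*,\dots,F_t^*,O^*$ in $G^*$ traversing the distinct dual edges $e_0,\dots,e_t$. The vertices $F_1^*,\dots,F_t^*$ are distinct (they are distinct faces) and distinct from $O^*$, and $O^*$ appears only as the shared endpoint; so this walk is a simple cycle of length $t+1=m$. Consequently $\cut{G,\pi}=m\le c(G^*)$, and taking the maximum over all drawings $\pi$ gives $\cut G\le c(G^*)$.

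The step I expect to require the most care is confirming that the closed walk is a genuine cycle rather than a degenerate shorter one. The delicate point is the double appearance of the outer face $O$: it is precisely this repetition that closes the cycle, and I must rule out that it collapses into a $2$-cycle, i.e.\ a multiple edge. This cannot occur, both because $G^*$ is simple (as $G$ is polyhedral) and, concretely, because no inner triangle of a triangulation on more than $3$ vertices can share two of its edges with the outer face: two outer edges already span all three outer vertices, whose only bounding face is $O$ itself. Hence $t\ge 2$ and the cycle has length at least $3$. The general-position reduction and the single-crossing property of the convex inner triangles are the remaining points, but both are routine.
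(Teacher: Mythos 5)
Your proof is correct and follows essentially the same route as the paper's: perturb the line into general position, read off the sequence of faces it traverses, use the convexity of triangular faces to conclude each face is visited at most once, and thereby obtain a cycle in $G^*$ whose length is at least the number of cut faces. Your treatment is somewhat more careful than the paper's terse argument---in particular you justify that the line re-enters the outer face only once (via convexity of the outer triangle) and you rule out the degenerate length-$2$ closed walk---but these are refinements of the same idea, not a different approach.
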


\begin{proof}
Given a crossing-free drawing $\pi$ of $G$ and a line $\ell$, we have to show
that $\ell$ crosses no more than $c(G^*)$ faces of $\pi$.
Shift $\ell$ a little bit to a new position $\ell'$ so that
$\ell'$ does not go through any vertex of $\pi$ and still cuts all the faces
that are cut by $\ell$.
Thus, $\ell'$ crosses boundaries of faces only via inner points of edges.
Each such crossing corresponds to transition from one vertex to another
along an edge in the dual graph $G^*$. Note that this walk is both started
and finished at the outer face of $\pi$. Since all faces are triangles,
each of them is visited at most once. Therefore, $\ell'$ determines a cycle
in $G^*$, whose length is at least the number of faces of $\pi$ cut by~$\ell$.
\end{proof}

Lemma \ref{lem:duals} suggests the following choice of $G_1$: Take
a cubic polyhedral graph $H$ approaching the infimum of the set of quotients
$\log(c(G)-1)/\log(v(G)-1)$ over all cubic polyhedral graphs $G$ and set $G_1=H^*$.
In particular, we can approach arbitrarily close to the shortness exponent
$\sigma$ defined in Section \ref{s:prel}. By Lemma \ref{lem:Gk}.3, 
we arrive at the main result of this section.

\begin{theorem}\label{thm:lin}
Let $\sigma$ denote the shortness exponent of the class of cubic polyhedral
graphs. Then for each $\alpha>\sigma$ there is a sequence of triangulations
$G$ with $\lin G=O(v(G)^\alpha)$.
\end{theorem}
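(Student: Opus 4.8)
The plan is to treat the theorem as a corollary of Lemmas \ref{lem:Gk} and \ref{lem:duals}: I will feed the recursive construction of Section \ref{s:lin} a seed triangulation $G_1$ chosen so that the exponent $\frac{\log(\cut{G_1}-1)}{\log(f(G_1)-1)}$ governing Lemma \ref{lem:Gk}.3 drops below the prescribed $\alpha$. The whole point is to express the two seed parameters $f(G_1)$ and $\cut{G_1}$ in terms of the order and circumference of a cubic polyhedral graph, which is exactly what planar duality together with Lemma \ref{lem:duals} lets me do.

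First I would fix $\alpha>\sigma$. By the definition of the shortness exponent as a limit inferior, there are cubic polyhedral graphs $H$ of arbitrarily large order with $\log c(H)/\log v(H)$ as close to $\sigma$ as desired; since $\sigma>0$, along any such family $c(H)\to\infty$ together with $v(H)$, so the shifted quotient $\frac{\log(c(H)-1)}{\log(v(H)-1)}$ has the same limit inferior $\sigma$. Hence I may pick a single cubic polyhedral graph $H$ with
$$
\frac{\log(c(H)-1)}{\log(v(H)-1)}<\alpha .
$$
I then set $G_1=H^*$. Because $H$ is cubic and polyhedral, $G_1$ is a triangulation with $(G_1)^*=H$, and by Euler's formula it has $v(G_1)=f(H)=2+v(H)/2\ge 4$ vertices, so it is an admissible seed for the construction.

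Next I would read off the seed parameters through duality. The faces of $G_1$ are the vertices of its dual, so $f(G_1)=v\bigl((G_1)^*\bigr)=v(H)$; and by Lemma \ref{lem:duals}, applicable since $v(G_1)>3$, we have $\cut{G_1}\le c\bigl((G_1)^*\bigr)=c(H)$. In particular $\cut{G_1}\le c(H)<v(H)=f(G_1)$ (the ratio of $H$ is below $1$), so the construction genuinely decreases $\cut{}/f(\cdot)$ at each step. Consequently the exponent $\alpha'=\frac{\log(\cut{G_1}-1)}{\log(f(G_1)-1)}$ of Lemma \ref{lem:Gk}.3 satisfies
$$
\alpha'\le\frac{\log(c(H)-1)}{\log(v(H)-1)}<\alpha .
$$
Applying Lemma \ref{lem:Gk}.3 to the sequence $G_1,G_2,\ldots$ then yields $\lin{G_k}<c\,v(G_k)^{\alpha'}$ with $c$ depending only on $G_1$, whence $\lin{G_k}=O(v(G_k)^\alpha)$, and $\{G_k\}$ is the required sequence.

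I expect no serious obstacle: the substance already lives in the two lemmas, and what remains is bookkeeping. The only points needing care are the duality identities $f(G_1)=v(H)$ and $\cut{G_1}\le c(H)$ (one must note that $\cut{}$ is bounded \emph{by} circumference, so the inequality runs in the favorable direction, and that monotonicity of $\log$ preserves it after the $-1$ shifts in numerator and denominator), and the harmless passage from $\log c/\log v$ to $\log(c-1)/\log(v-1)$, whose limit inferior is unchanged precisely because both $c(H)$ and $v(H)$ tend to infinity along an optimizing family.
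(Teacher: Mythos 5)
Your proposal is correct and takes essentially the same route as the paper: pick a cubic polyhedral graph $H$ whose shifted quotient $\log(c(H)-1)/\log(v(H)-1)$ falls below $\alpha$, set $G_1=H^*$, use duality to get $f(G_1)=v(H)$ together with Lemma \ref{lem:duals} to get $\cut{G_1}\le c(H)$, and then invoke Lemma \ref{lem:Gk}.3. If anything, you are more explicit than the paper on one point it glosses over, namely that replacing $\log c/\log v$ by $\log(c-1)/\log(v-1)$ does not change the limit inferior because $c(H)\to\infty$ along an optimizing family (which follows from $\sigma>0$).
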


\begin{corollary}
For infinitely many $n$ there is a planar graph $G$ on $n$ vertices with $\fit G=O(n^{0.99})$.
\end{corollary}

Note that the graph $G_k$ constructed in the proof of Theorem \ref{thm:lin}
is obtained from $G_{k-1}$ in a number of steps by gluing with a copy of $G_1$
at a clique of size 3. It follows that all $G_1,G_2,\ldots$ have equal tree-width.
If we take $H$ to be the Barnette-Bos\'ak-Lederberg example of a non-Hamiltonian
cubic polyhedral graph, then the construction starts with $G_1=H^*$ of tree-width
no more than~8. We, therefore, obtain the following result.

\begin{corollary}\label{cor:btw}
For some integer constant $t\le8$ and real constant $\alpha\in(0,1)$,
there is a sequence of triangulations $G$ of tree-width at most $t$
such that $\fit G\le\lin G=O(v(G)^\alpha)$.
\end{corollary}

\begin{remark}\rm
Theorem \ref{thm:lin} can be translated to a result on convex polyhedra.
Given a convex polyhedron $\pi$ and a plane $\ell$, let $\plin{\pi,\ell}=V(\pi)\cap\ell$ where
$V(\pi)$ denotes the vertex set of $\pi$. Given a polyhedral graph $G$, we define
$$
\plin G=\max_{\pi,\ell}\plin{\pi,\ell},
$$ 
where $\pi$ ranges over convex polyhedra with 1-skeleton
isomorphic to $G$. Using our construction of a sequence of triangulations $G_1,G_2,G_3,\ldots$,
we can prove that $\plin G=O(v(G)^\alpha)$ for each $\alpha>\sigma$ and infinitely many polyhedral
graphs~$G$.

Gr\"unbaum \cite{Gru} investigated the minimum number of planes which are enough to
cut all edges of a convex polyhedron $\pi$. Given a polyhedral graph $G$, define
$$
\pcute G=\max_{\pi,\ell}\pcute{\pi,\ell},
$$ 
where $\pcute{\pi,\ell}$ denotes the number of edges
that are cut by a plane $\ell$ in a convex polyhedron $\pi$ with 1-skeleton isomorphic
to $G$. Using the relation $\pcute G\le c(G^*)$, Gr\"unbaum showed 
(implicit in \cite[pp.\ 893--894]{Gru}) that $\pcute G=O(v(G)^\beta)$
for each $\beta>\log_32$ and infinitely many~$G$ (where $\log_32$ is the shortness
exponent for the class of all polyhedral graphs).
\end{remark}

\section{Graphs with large free collinear sets}\label{s:free}

Let $\pi$ be a crossing-free drawing and $\ell$ be a line.
Recall that a set $S\subset\pi(V_G)\cap\ell$ is called \emph{free} if, 
whenever we displace the vertices in $S$ along $\ell$ without violating 
their mutual order, thereby introducing edge crossings, we are able
to untangle the modified drawing by only moving the vertices in $\pi(V_G)\setminus S$. 
By $\free G$ we denote the largest size
of a free collinear set maximized over all drawings of a graph~$G$.

\begin{theorem}\label{thm:fixfree}
$\fix G\ge\sqrt{\free G}$.
\end{theorem}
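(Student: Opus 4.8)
The plan is to prove the inequality $\fix G \ge \sqrt{\free G}$ by exhibiting, for an \emph{arbitrary} drawing $\pi$ of $G$, a crossing-free redrawing $\pi'$ that keeps at least $\sqrt{\free G}$ vertices fixed. Let $k=\free G$, and fix a crossing-free drawing $\rho$ of $G$ together with a free collinear set $S$ of size $k$, all lying on a line $\ell$. The starting observation is that since $S$ is free, we are allowed to slide the vertices of $S$ to \emph{any} positions along $\ell$ that respect their mutual order, and the drawing can always be repaired by moving only the vertices outside $S$. This gives us an enormous amount of freedom: the positions of the $k$ points of $S$ on $\ell$ can be chosen to match essentially any monotone target, and the rest of the drawing adapts.

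The heart of the argument should be a counting/pigeonhole step over the given arbitrary drawing $\pi$. Order the vertices of $S$ as $s_1,\dots,s_k$ according to their order along $\ell$ in $\rho$. Now look at the images $\pi(s_1),\dots,\pi(s_k)$ in the target drawing $\pi$: I would extract from these a large subsequence whose positions are \emph{compatible} with being placed along a single line in the correct order, so that a conformal displacement of $S$ along $\ell$ can be made to coincide with $\pi$ on that subsequence. Concretely, I expect to use an Erd\H{o}s--Szekeres type argument: among any $k$ points one can find a monotone subsequence of length $\ge\sqrt{k}$, and a monotone subsequence can be realized by points lying in convex (order-preserving) position along a suitable line. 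This is exactly where the bound $\sqrt{\free G}$ rather than $\free G$ enters, matching the square-root in the statement. Having selected such a subsequence of size $\sqrt{k}$, I conformally displace $S$ so that these $\sqrt{k}$ vertices land precisely on their $\pi$-positions, then invoke freeness to untangle the remainder; the resulting $\pi'$ is crossing-free and agrees with $\pi$ on at least $\sqrt{k}$ vertices, giving $\fix{G,\pi}\ge\sqrt{\free G}$ for every $\pi$, hence $\fix G\ge\sqrt{\free G}$.

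Two technical points will carry the real weight. First, the vertices of $S$ sit on $\ell$ in $\rho$ but must be moved to the \emph{two-dimensional} positions dictated by $\pi$; the subtlety is that freeness only permits motion \emph{along} $\ell$, so I must first apply a global affine or projective transformation to $\rho$ carrying $\ell$ to a line through the chosen target points, and only then use the one-dimensional sliding freedom. Reconciling the global transformation (which may move all vertices) with the requirement that the selected vertices end up exactly at their $\pi$-positions is the step I expect to be the main obstacle, and it is presumably where the cited result \cite[Lemma 1]{Bose} is invoked — the excerpt explicitly signals that this lemma ``is behind the equality \refeq{fixfixl}.'' Second, I must verify that the monotone subsequence extracted by Erd\H{o}s--Szekeres genuinely corresponds to a \emph{conformal} (order-preserving) displacement of $S$ and not merely a monotone one in a single coordinate; handling both increasing and decreasing cases, possibly after reflecting $\ell$, should close this gap. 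The remaining bookkeeping — that untangling via freeness does not disturb the fixed subsequence — follows directly from the definition of a free set and needs no further machinery.
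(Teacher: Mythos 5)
Your overall combination of ingredients (Erd\H{o}s--Szekeres plus freeness) is the same as the paper's, but your proposal tries to run the argument directly against an \emph{arbitrary} two-dimensional drawing $\pi$, and that is where it genuinely breaks. Freeness is an intrinsically one-dimensional notion: it only licenses relocations of $S$ \emph{along the line} $\ell$, and this persists under any affine or projective transformation of $\rho$, since such maps preserve collinearity. Consequently, in every crossing-free drawing you can produce by invoking freeness (of $\rho$ or of any affine image of it), the vertices of $S$ still lie on a single line. So the vertices of the subsequence $F\subset S$ you hope to fix must sit at positions $\pi(F)$ that are \emph{collinear} --- but in a generic drawing $\pi$ no three points of $\pi(S)$ are collinear, so your scheme can never fix more than two vertices. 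The step ``a monotone subsequence can be realized by points lying in convex (order-preserving) position along a suitable line'' is the flaw: an Erd\H{o}s--Szekeres-monotone subsequence of points in the plane is monotone in coordinates, not collinear, and no affine transformation carries a line onto a set of non-collinear points.

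The missing idea is exactly the equality $\fix G=\fixl G$, i.e.\ \refeq{fixfixl}, proved in \cite{KPRSV} (and resting on \cite[Lemma 1]{Bose}), whose relevance you sensed but whose role you misassigned. It is not there to ``reconcile a global transformation''; it is invoked at the very start to reduce the adversary to \emph{collinear} drawings: since $\fix G=\fixl G$, it suffices to untangle drawings $\pi\function{V(G)}{\ell}$ in which all vertices of $G$ already lie on one line. Once $\pi$ is collinear, both $\pi(S)$ and $\rho(S)$ lie on $\ell$, and Erd\H{o}s--Szekeres is applied to the \emph{two linear orders} of $S$ (its order under $\rho$ versus its order under $\pi$): taking $(k-1)^2<\free G\le k^2$ and $|S|>(k-1)^2$, one gets $F\subset S$ with $|F|=k$ appearing in the same order in both. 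Freeness then allows a conformal displacement of $S$ along $\ell$ placing $F$ exactly at $\pi(F)$, which extends to a crossing-free drawing fixing $F$, giving $\fix{G,\pi}\ge k\ge\sqrt{\free G}$. With the reduction in hand your argument goes through; without it, it cannot be repaired.
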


\begin{proof}
Let $\fixl G$ be defined similarly to \refeq{deffix} but 
with minimization over all collinear $X$ (or over $\pi$
such that $\pi(V_G)$ is collinear). Obviously, $\fix G\le\fixl G$.
As proved in \cite{KPRSV} (based on \cite[Lemma 1]{Bose}), we actually have
\begin{equation}\label{eq:fixfixl}
\fix G=\fixl G.
\end{equation}
We use this equality here.

Suppose that $(k-1)^2<\free G\le k^2$.
By \refeq{fixfixl}, it suffices to show that any drawing 
$\pi\function{V_G}\ell$ of $G$ on a line $\ell$
can be made crossing-free with keeping $k$ vertices fixed.
Let $\rho$ be a crossing-free drawing of $G$ such that, for some
$S\subset V_G$ with $|S|>(k-1)^2$, $\rho(S)$ is a free collinear set on $\ell$.
By the Erd\H{o}s-Szekeres theorem, there exists a set $F\subset S$ of
$k$ vertices such that $\pi(F)$ and $\rho(F)$ lie on $\ell$ in the same order.
By the definition of a free set, there is a crossing-free drawing $\rho'$
of $G$ with $\rho'(F)=\pi(F)$. Thus, we can come from $\pi$ to $\rho'$ with $F$
staying fixed.
\end{proof}

Theorem \ref{thm:fixfree} sometimes gives a short way of proving
bounds of the kind $\fix G=\Omega(\sqrt n)$. For example, for the wheel graph
$W_n$ we immediately obtain $\fix{W_n}>\sqrt n-1$ from an easy observation
that $\free{W_n}=n-2$ (in fact, this repeats the argument of Pach and Tardos
for cycles \cite{PTa}). The classes of graphs with linear $\free G$ are
therefore of big interest in the context of disentanglement of drawings.
One of such classes is addressed below.

Given a drawing $\pi$, we call it \emph{track drawing} if there are parallel
lines, called \emph{tracks}, such that every vertex of $\pi$ lies on
one of the layers and every edge either lies on one of the layers or
connects endvertices lying on two consecutive layers. We call a graph
\emph{track drawable}
if it has a crossing-free track drawing. 

An obvious example of a track drawable graph is a grid graph $P_s\times P_s$.
It is also easy to see that any tree is track drawable:
two vertices are to be aligned on the same layer iff they are at the same
distance from an arbitrarily assigned root. The latter example can be considerably
extended. 

Call a drawing \emph{outerplanar} if all the vertices lie on
the outer face. An \emph{outerplanar graph} is a graph admitting
an outerplanar drawing (this definition
does not depend on whether straight line or curved drawings are considered).
The following fact is illustrated by Fig.~\ref{fig:outerlay}.

\begin{figure}
\centering
 \begin{tikzpicture}[every node/.style={circle,fill,inner sep=1pt}]
  \path[scale=.7] (0,0) node[circle,draw,color=black,fill=black!40,inner sep=1.5pt] (a)   {}
                 (0,1) node (b)   {} edge (a)
             (.5,1.5) node (c)   {}  edge (b)
              (.5,-.5) node (d) {} edge (a) 
             (1,2) node (e) {} 
                 (1,1) node (g) {} edge (b) edge (c) edge (e)
               (1,0) node (h)   {} edge (a) edge (d) edge (g)
             (1,-1) node (i)   {} edge (h) 
             (2,2) node (j)   {} edge (e)
              (2,1) node (k)   {} edge (g) edge (j)
              (2,0) node (l) {} edge (h) edge (k)
              (2,-1) node (m) {} edge (i) edge (l)
              (3,1) node (n) {} edge (k)
              (3,0) node (o) {} edge (n) edge (l)
             (4,1) node (p) {} edge (n)
              (4,0) node (q) {} edge (o) edge (p);
 \end{tikzpicture}
\qquad\qquad\qquad
 \begin{tikzpicture}[baseline=-.5cm,every node/.style={circle,fill,inner sep=1pt}]
  \path[scale=.7] (.5,-1) node[circle,draw,color=black,fill=black!40,inner sep=1.5pt] (a)   {}
                 (-2,0) node (b)   {} edge (a)
             (-1.5,1) node (c)   {}  edge (b)
              (2,0) node (d) {} edge (a) 
             (-1,1) node (e) {} 
                 (0,0) node (g) {} edge (b) edge (c) edge (e)
               (1,0) node (h)   {} edge (a) edge (d) edge (g)
             (2,1) node (i)   {} edge (h) 
             (-.5,1) node (j)   {} edge (e)
              (0,1) node (k)   {} edge (g) edge (j)
              (1,1) node (l) {} edge (h) edge (k)
              (1.5,1) node (m) {} edge (i) edge (l)
              (0,2) node (n) {} edge (k)
              (1,2) node (o) {} edge (n) edge (l)
             (0,3) node (p) {} edge (n)
              (1,3) node (q) {} edge (o) edge (p);
\draw[scale=.7,loosely dashed] (-2.5,-1) -- (2.7,-1) (-2.5,0) -- (2.7,0) (-2.5,1) -- (2.7,1) 
                      (-2.5,2) -- (2.7,2) (-2.5,3) -- (2.7,3);
  \path[scale=.7] (2.7,-1) node[coordinate,label=right:$t_1$] (t1) {}
                 (2.7,0) node[coordinate,label=right:$t_2$] (t2) {}
                 (2.7,1) node[coordinate,label=right:$t_3$] (t3) {}
                 (2.7,2) node[coordinate,label=right:$t_4$] (t4) {}
                 (2.7,3) node[coordinate,label=right:$t_5$] (t5) {};
 \end{tikzpicture}
\caption{An outerplanar graph and its track drawing.}
\label{fig:outerlay}
\end{figure}
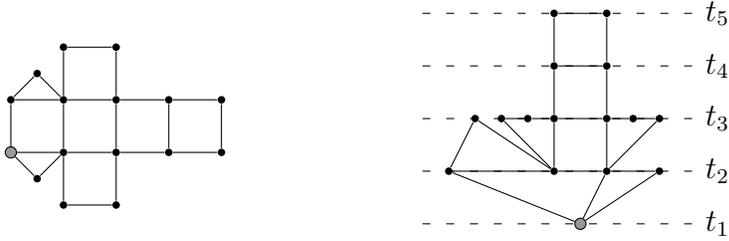

\begin{lemma}[Felsner, Liotta, and Wismath \cite{FelsnerLW03}]\label{lem:outerlayered}
Outerplanar graphs are track drawable.
\end{lemma}

\begin{lemma}\label{lem:layeredfree}
For any track drawable graph $G$ on $n$ vertices we have $\free G\ge n/2$.
\end{lemma}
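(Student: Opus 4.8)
The plan is to exhibit, in an almost layered drawing, a free collinear set containing at least half of the vertices. Fix a crossing-free almost layered drawing $\pi$ of $G$ with layers $\ell_1,\ell_2,\ell_3,\ldots$, and look at the vertices lying on the odd-indexed layers versus those on the even-indexed layers. Since every vertex sits on exactly one layer, one of these two families contains at least $v(G)/2$ vertices; say it is the union of the vertices on $\ell_1,\ell_3,\ell_5,\ldots$ (the other case is symmetric). The key structural point is that no edge of $G$ has both endpoints on two odd layers that are distinct: by the definition of an almost layered drawing, every edge either lies within a single layer or joins two consecutive layers, and consecutive layers always have opposite parity. So the only edges incident to the chosen vertex set either lie entirely on one odd layer or cross to an adjacent even layer.

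First I would make the selected set collinear. The layers are parallel lines, not a single line, so I cannot directly take $\ell_1\cup\ell_3\cup\cdots$ as a collinear set. Instead I would apply an affine shear (a vertical compression or a homeomorphism collapsing the odd layers onto one line $\ell$) and argue that the relevant set can be placed on a common line while keeping the drawing crossing-free. The cleanest route is to observe that we are free to choose the drawing that witnesses $\free G$, so I would construct a \emph{new} crossing-free drawing in which all chosen vertices are collinear on a single line $\ell$, with the even-layer vertices placed off $\ell$ on alternating sides (odd layers collapsed to $\ell$, even layers pushed slightly above and below, or all to one side), preserving planarity because the adjacency pattern only connects $\ell$ to the neighboring off-line vertices.

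Next I would verify freeness. Take any conformal displacement $\delta$ of the chosen set $S$ along $\ell$ that preserves the left-to-right order of its vertices. I need to extend $\delta$ to the whole vertex set so that the result stays crossing-free. Because the edges touching $S$ go only to the adjacent even-layer vertices (or stay within $\ell$), the local picture around each even-layer vertex is a fan of edges down to a consecutive block of $S$-vertices on $\ell$. Sliding the $S$-vertices monotonically along $\ell$ deforms each such fan but, since the order is preserved, I can reposition every even-layer vertex continuously (for instance near the barycenter of its current neighbors on $\ell$, nudged off the line) so that no two fans interfere and no edge crosses $\ell$ improperly. Formalizing this — producing an explicit placement of the non-$S$ vertices as a function of $\delta$ and checking no crossings arise — is the main obstacle, and I expect it to be the bulk of the work; the combinatorial selection and the count $|S|\ge v(G)/2$ are immediate.

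The hard part will therefore be the crossing-free extension argument: I must show that the monotone-order condition alone suffices to reroute all off-line vertices consistently. I anticipate handling this by treating the $S$-vertices on $\ell$ together with their pendant even-layer neighbors as independent "caterpillar" pieces stacked on either side of $\ell$, using that consecutive even layers are separated by odd layers so that interactions between distinct even-layer vertices are mediated only through their ordered contacts on $\ell$. Once each piece can be redrawn above (or below) $\ell$ respecting the displaced order, their disjointness gives global planarity, establishing that $S$ is free and hence $\free G\ge v(G)/2$.
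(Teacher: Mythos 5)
Your high-level plan (split the vertices by layer parity, keep the larger class, make it collinear, verify freeness) matches the paper's, but the step where you make the odd layers collinear is precisely where your proposal breaks down. Collapsing the odd layers onto one line by a shear or vertical compression \emph{superimposes} the layers, and this destroys planarity irreparably. Since an almost layered drawing may have edges lying \emph{on} layers, consider $\ell_1=\{a_1,a_2\}$ with edge $a_1a_2$, $\ell_3=\{c_1,c_2\}$ with edge $c_1c_2$, and $b\in\ell_2$ adjacent to all four, with $x$-coordinates ordered $a_1<c_1<c_2<a_2$. This is a valid crossing-free almost layered drawing, but after collapsing $\ell_1$ and $\ell_3$ onto a common line the segments $a_1a_2$ and $c_1c_2$ overlap, and no placement of the off-line vertices can repair this: the $x$-coordinates of the collapsed vertices themselves must change, i.e., what is needed is not a collapse but a genuine rearrangement. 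The same interleaving also undermines your freeness sketch: a vertex of $\ell_2$ can have neighbors in both $\ell_1$ and $\ell_3$, so after superposition its neighborhood on $\ell$ need not be a consecutive block, and the ``fans'' of distinct even-layer vertices do interleave, so they cannot be treated as independent pieces.

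The paper's construction supplies exactly the missing idea: do not superimpose the odd layers, but lay their segments $\sigma_1,\sigma_3,\sigma_5,\ldots$ on $\ell$ \emph{one after another}, as disjoint intervals in this order, and place each even segment $\sigma_{2i}$ on a perpendicular $p_{2i}$ to $\ell$ erected between the intervals of $\sigma_{2i-1}$ and $\sigma_{2i+1}$, alternating between the upper and lower half-plane. Geometrically this unfolds each strip between consecutive layers into a quadrant of the plane, so crossing-freeness (including layer edges) is preserved. With this layout the freeness verification --- the part you correctly flagged as the main obstacle --- becomes immediate: a conformal displacement moves the blocks $\sigma_1,\sigma_3,\ldots$ along $\ell$ without changing their order, so each perpendicular $p_{2i}$ can simply be slid so as to remain between the displaced blocks, carrying $\sigma_{2i}$ with it rigidly. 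Without the end-to-end layout I do not see how your caterpillar argument could be completed, so as written the proposal has a genuine gap at its central step.
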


\begin{proof}
Let $\pi$ be a track drawing of $G$ with tracks $t_1,\ldots,t_s$, 
lying in the plane in this order. It is practical to assume that
$t_1,\ldots,t_s$ are parallel straight line segments (rather than unbounded lines)
containing all the vertices of $\pi$. Let $\ell$ be a horizontal line.
Consider two redrawings of~$\pi$.

To make a redrawing $\pi'$, we put $t_1,t_3,t_5,\ldots$
on $\ell$ one by one. For each even index $2i$, 
we drop a perpendicular $p_{2i}$ to $\ell$ between the segments $t_{2i-1}$ and 
$t_{2i+1}$. We then put each $t_{2i}$ on $p_{2i}$ so that $t_{2i}$
is in the upper half-plane if $i$ is odd and in the lower half-plane if $i$ is
even. It is clear that such a relocation can be done so that $\pi'$
is crossing-free (the whole procedure can be thought of as sequentially unfolding 
each strip between consecutive layers to a quadrant of the plane, see the left side of
Fig.~\ref{fig:redraw}).

It is clear that the vertices on $\ell$ form a free collinear set:
if the neighboring vertices of $t_{2i-1}$ and $t_{2i+1}$ are displaced, then
$p_{2i}$ is to be shifted appropriately.

In the redrawing $\pi''$ the roles of odd and even indices are interchanged,
that is, $t_2,t_4,t_6,\ldots$ are put on $\ell$ and $t_1,t_3,t_5,\ldots$ 
on perpendiculars (see the right side of Fig.~\ref{fig:redraw}).
It remains to observe that at least one of the inequalities $\free{G,\pi'}\ge n/2$ 
and $\free{G,\pi''}\ge n/2$ must be true.
\end{proof}

\begin{figure}
\centering
 \begin{tikzpicture}[every node/.style={circle,fill,inner sep=1pt}]
  \path[scale=.6] (-1.5,0) node[circle,draw,color=black,fill=black!40,inner sep=1.5pt] (a)   {}
                 (0,3.5) node (b)   {} edge (a)
             (4,0) node (c)   {}  edge (b)
              (0,.5) node (d) {} edge (a) 
             (3,0) node (e) {} 
                 (0,2.5) node (g) {} edge (b) edge (c) edge (e)
               (0,1.5) node (h)   {} edge (a) edge (d) edge (g)
             (.5,0) node (i)   {} edge (h) 
             (2.5,0) node (j)   {} edge (e)
              (2,0) node (k)   {} edge (g) edge (j)
              (1.5,0) node (l) {} edge (h) edge (k)
              (1,0) node (m) {} edge (i) edge (l)
             (4.5,-1) node (n) {} edge (k)
             (4.5,-2) node (o) {} edge (n) edge (l)
            (5.5,0) node (p) {} edge (n)
             (6.5,0) node (q) {} edge (o) edge (p);
\draw[scale=.6,loosely dashed] (-2.5,0) -- (7.5,0) (0,.25) -- (0,4.7) (4.5,-.5) -- (4.5,-3);
  \path[scale=.6] (-1.5,0) node[coordinate,label=below:$t_1$] (t1) {}
                (0,4.7) node[coordinate,label=above:$t_2$] (t2) {}
                 (.75,0) node[coordinate,label=below:$t_3$] (t3) {}
                (4.5,-3) node[coordinate,label=below:$t_4$] (t4) {}
                 (5.75,0) node[coordinate,label=below:$t_5$] (t5) {};
 \end{tikzpicture}
\qquad\qquad\qquad
 \begin{tikzpicture}[baseline=-2.3cm,every node/.style={circle,fill,inner sep=1pt}]
  \path[scale=.6] (-1,1.5) node[circle,draw,color=black,fill=black!40,inner sep=1.5pt] (a)   {}
                 (2.5,0) node (b)   {} edge (a)
             (3,-.5) node (c)   {}  edge (b)
              (0,0) node (d) {} edge (a) 
             (3,-1) node (e) {} 
                 (1.5,0) node (g) {} edge (b) edge (c) edge (e)
               (.5,0) node (h)   {} edge (a) edge (d) edge (g)
             (3,-3.5) node (i)   {} edge (h) 
             (3,-1.5) node (j)   {} edge (e)
              (3,-2) node (k)   {} edge (g) edge (j)
              (3,-2.5) node (l) {} edge (h) edge (k)
              (3,-3) node (m) {} edge (i) edge (l)
              (4,0) node (n) {} edge (k)
              (5,0) node (o) {} edge (n) edge (l)
             (5.5,1.5) node (p) {} edge (n)
              (5.5,.5) node (q) {} edge (o) edge (p);
\draw[scale=.6,loosely dashed] (-.75,0) -- (5.35,0) 
                               (-1,.25) -- (-1,2.5) 
                               (3,-.25) -- (3,-4.5) 
                               (5.5,.25) -- (5.5,2.5);
  \path[scale=.6] (-1,2.5) node[coordinate,label=above:$t_1$] (t1) {}
                 (.25,0) node[coordinate,label=below:$t_2$] (t2) {}
                 (3,-4) node[coordinate,label=right:$t_3$] (t3) {}
                 (4.25,0) node[coordinate,label=below:$t_4$] (t4) {}
                 (5.5,2.5) node[coordinate,label=above:$t_5$] (t5) {};
 \end{tikzpicture}
\caption{Proof of Lemma \protect\ref{lem:layeredfree}:
two redrawings of the graph from Fig.~\protect\ref{fig:outerlay}.}
\label{fig:redraw}
\end{figure}
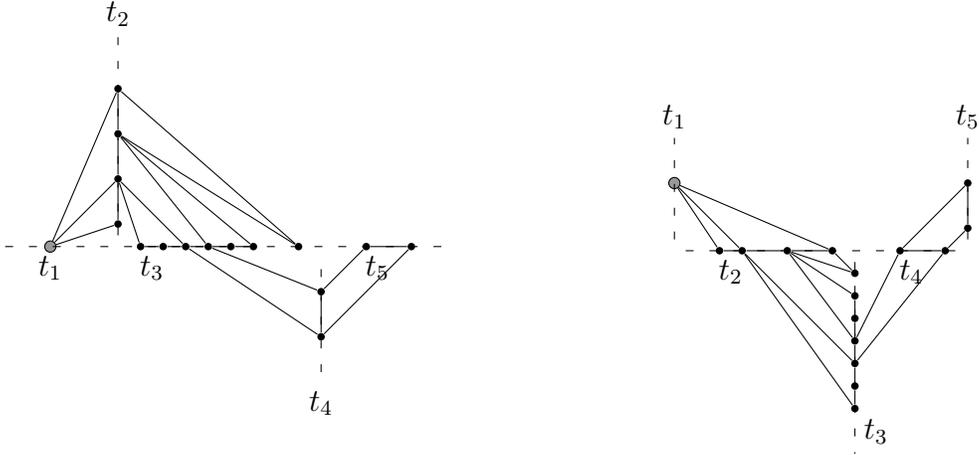

Combining Lemmas \ref{lem:outerlayered} and \ref{lem:layeredfree} with Theorem~\ref{thm:fixfree},
we obtain another proof for the following result.

\begin{corollary}[Goaoc et al.~\cite{merged}]\label{cor:outer}
For any outerplanar graph $G$ with $n$ vertices, we have $\fix G\ge \sqrt{n/2}$.
\end{corollary}

In fact, Theorem~\ref{thm:fixfree} has a much broader range of application.
The class of \emph{2-trees} is defined recursively as follows:
\begin{itemize}
\item
the graph consisting of two adjacent vertices is a 2-tree;
\item
if $G$ is a 2-tree and $H$ is obtained from $G$ by adding a new
vertex and connecting it to two adjacent vertices of $G$,
then $H$ is a 2-tree.
\end{itemize}
A graph is a \emph{partial 2-tree} if it is a subgraph of a 2-tree.
It is well known that the class of partial 2-trees coincides
with the class of graphs with treewidth at most 2.
Any outerplanar graph is a partial 2-tree, and the same holds
for series-parallel graphs (the latter class is sometimes defined
so that it coincides with the class of partial 2-trees).
Note that not all 2-trees are track drawable (for example, the graph
consisting of three triangles that share one edge).

\begin{theorem}\label{thm:2trees}
If $G$ is a partial 2-tree with $n$ vertices, then $\displaystyle\free G>n/30$.
\end{theorem}

\begin{corollary}\label{cor:2trees}
For any partial 2-tree $G$ with $n$ vertices we have $\fix G\ge \sqrt{n/30}$.
\end{corollary}

\section{The proof of Theorem \protect\ref{thm:2trees}}\label{s:2trees}

\subsection{Outline of the proof}

Given two vertex-disjoint 2-trees, we can add three edges so that the graph
obtained is a 2-tree. It readily follows that any partial 2-tree $G$ is
a spanning subgraph of some 2-tree $H$ (that is, $V_G=V_H$).
The following lemma, therefore, shows that it is enough to prove Theorem \ref{thm:2trees}
for 2-trees.

\begin{lemma}
If $G$ is a spanning subgraph of a planar graph $H$, then
$\free G\ge\free H$.
\end{lemma}

\begin{proof}
If $X$ is a free set of collinear vertices in a drawing of $H$,
then $X$ stays free in the induced drawing of~$G$.
\end{proof}

Thus, from now on we suppose that $G$ is a 2-tree. We will consider plain drawings of $G$
having a special shape. Specifically, we call a drawing \emph{folded}
if for any two triangles that share an edge, one contains the other.
Thus, all the triangles sharing an edge form a containment chain,
see Fig.~\ref{fig:folded}.
Folded drawings can be obtained by the following recursive procedure.
Suppose that $G$ is obtained from a 2-tree $G'$ by attaching a new vertex $v$
to an edge $e$ of $G'$. Then, once $G'$ is drawn, we put $v$
inside that triangular face of the current drawing of $G'$ whose boundary contains $e$.
The procedure can be implemented
in many ways giving different outputs; all of them are called folded drawings.

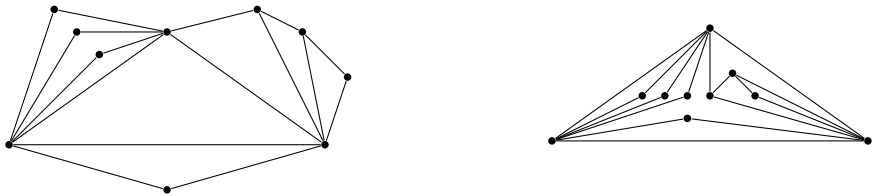
\begin{figure}
\centering
 \begin{tikzpicture}[every node/.style={circle,fill,inner sep=1pt}]
  \path[scale=.6] (0,0) node (a)   {}
                 (7,0) node (b)   {} edge (a)
             (3.5,2.5) node (c)   {} edge (a) edge (b)
                 (2,2) node (ac1) {} edge (a) edge (c)
             (1.5,2.5) node (ac2) {} edge (a) edge (c)
                 (1,3) node (ac3) {} edge (a) edge (c)
               (5.5,3) node (d)   {} edge (b) edge (c)
             (6.5,2.5) node (e)   {} edge (b) edge (d)
             (7.5,1.5) node (f)   {} edge (b) edge (e)
              (3.5,-1) node (u)   {} edge (a) edge (b);
 \end{tikzpicture}
\qquad\qquad\qquad
 \begin{tikzpicture}[baseline=-.7cm,every node/.style={circle,fill,inner sep=1pt}]
  \path[scale=.6] (0,0) node (a)   {}
                 (7,0) node (b)   {} edge (a)
             (3.5,2.5) node (c)   {} edge (a) edge (b)
                 (2,1) node (ac1) {} edge (a) edge (c)
               (2.5,1) node (ac2) {} edge (a) edge (c)
                 (3,1) node (ac3) {} edge (a) edge (c)
               (3.5,1) node (d)   {} edge (b) edge (c)
               (4,1.5) node (e)   {} edge (b) edge (d)
               (4.5,1) node (f)   {} edge (b) edge (e)
                (3,.5) node (u)   {} edge (a) edge (b);
 \end{tikzpicture}
\caption{A 2-tree and its folded drawing.}
\label{fig:folded}
\end{figure}

In fact, we will use a parallel version of the procedure producing folded drawing,
which is in Section \ref{ss:fold} called \foldn{}. The parallelization will much facilitate
proving by induction that every set of collinear vertices in a folded drawing is free.
The latter fact will reduce our task to constructing folded drawings with linear number
of collinear vertices. In Sections \ref{ss:fold1} and \ref{ss:fold2} 
we will analyze two specialized versions of \foldn{}, called
\foldn{1} and \foldn{2}, and show that, for each $G$, at least one of them succeeds
in producing a large enough collinear set.

\medskip

\paragraph{\it Some definitions.}

The vertex set and the edge set of a graph $G$ will be denoted, respectively, by
$V(G)$ and $E(G)$.
Let $G$ be a 2-tree with $n$ vertices. Thus, $G$ is obtained in $n-2$ steps
from the single-edge graph. Since exactly one new triangle appears in each step,
$G$ has $n-2$ triangles. Denote the set of all triangles in $G$ by $\triangle(G)$.
With $G$ we associate a graph $T_G$ such that $V(T_G)=E(G)\cup\triangle(G)$
and $E(T_G)$ consists of all pairs $\{uv,uvw\}$ where $uvw\in\triangle(G)$.
A simple inductive argument shows that $T_G$ is a tree.

\subsection{Folded drawings}\label{ss:fold}

A \emph{subfold} of a geometric triangle $ABC$ is a crossing-free geometric graph 
\edit{introduced?}
consisting of three triangles $A'BC$, $AB'C$, and $ABC'$ such that all of them are
inside $ABC$, see the first part of Fig.~\ref{fig:subfolds}.

A 2-tree $G$ with a designated triangle $abc\in\triangle(G)$ will be called
\emph{rooted (at the triangle $abc$)}. Since $abc$ has degree 3 in $T_G$, removal of 
$abc$ from $T_G$ splits this tree into three subtrees. Let $T_{ab}$ denote
the subtree containing the vertex $ab\in E(G)$. Note that $T_{ab}=T_{G'}$
for a certain 2-tree $G'$, a subgraph of $G$. We will denote $G'$ by $G_{ab,c}$.
The 2-trees $G_{bc,a}$ and $G_{ac,b}$ are defined symmetrically.

Now, given a 2-tree $G$, we define the class of \emph{folded drawings} of $G$.
Such a drawing is obtained by drawing an arbitrary triangle $abc\in\triangle(G)$
as a geometric triangle $ABC$ and then recursively drawing each of
$G_{ab,c}$, $G_{ac,b}$, and $G_{bc,a}$ within smaller geometric triangles
forming a subfold of $ABC$. More specifically, given a 2-tree G with root $abc\in\triangle(G)$
and a geometric triangle $ABC$, consider the following procedure.

\medskip

\shift\fold{}{G,abc,ABC}\\[-6.5mm]
\begin{itemize}
\item
draw $abc$ as $ABC$;
\item
choose three points $A'$, $B'$, and $C'$ inside $ABC$ specifying a subfold of
this triangle;
\item
designate roots $abc'$, $ab'c$, and $a'bc$ in 2-trees  $G_{ab,c}$, $G_{ac,b}$, and $G_{bc,a}$
respectively (if any of these 2-trees is empty, this branch of the procedure terminates);
\item
invoke\\
\fold{}{G_{ab,c},abc',ABC'}, 
\fold{}{G_{ac,b},ab'c,AB'C}, and  
\fold{}{G_{bc,a},a'bc,A'BC}\\
(note that the three subroutines can be executed independently in parallel).
\end{itemize}

\smallskip

It is clear that \fold{}{G,abc,ABC} produces a drawing of $G$ with outer face $ABC$.
A drawing of $G$ is called \emph{folded} if it can be obtained as the output of \fold{}{G,abc,ABC}.

Note that the described procedure is nondeterministic as we have a lot of freedom
in choosing points $A',B',C'$ and we can also have a choice of vertices $a',b',c'$.
We now introduce some notions allowing us to specify any particular computational
path of \foldn{}.

Any subfold of a geometric triangle $ABC$ specified by points $A',B',C'$
will be called a \emph{depth-1 subfold} of $ABC$, and the triangles
$A'BC$, $AB'C$, and $ABC'$ will be referred to as its triangular faces
(the outer face, which is also triangular, is not taken into account).
A \emph{depth-$(i+1)$ subfold} of $ABC$ is obtained from any depth-$i$ subfold
by subfolding each of its $3^i$ triangular faces (thereby increasing
the number of triangular faces to $3^{i+1}$), see Fig.~\ref{fig:subfolds}.

\begin{figure}
\centering
 \begin{tikzpicture}[every node/.style={circle,fill,inner sep=1pt}]
  \path[scale=.25] (0,0) node[label=left:$A$] (a)   {}
                 (16,0) node[label=right:$B$] (b)   {} edge (a)
             (8,14) node[label=above:$C$] (c)   {} edge (a) edge (b)
             (8,4.5) node[label=below:$C'$] (z)   {} edge (a) edge (b)
                 (7.5,5.5) node[label=above left:$B'$] (y) {} edge (a) edge (c)
               (8.5,5.5) node[label=above right:$A'$] (x)   {} edge (b) edge (c);
 \end{tikzpicture}
\qquad\qquad
 \begin{tikzpicture}[every node/.style={circle,fill,inner sep=1pt}]
  \path[scale=.25] (0,0) node[label=left:$A$] (a)   {}
                 (16,0) node[label=right:$B$] (b)   {} edge (a)
             (8,14) node[label=above:$C$] (c)   {} edge (a) edge (b)
             (8,4.5) node (z)   {} edge (a) edge (b)
                 (7.5,5.5) node (y) {} edge (a) edge (c)
               (8.5,5.5) node (x)   {} edge (b) edge (c)
               (5.5,6.5) node (ac)   {} edge (a) edge (c)
               (6.5,7) node (cy)   {} edge (c) edge (y)
               (5.5,5.5) node (ay)   {} edge (a) edge (y)
               (9.5,7) node (xc)   {} edge (x) edge (c)
               (10.5,6.5) node (bc)   {} edge (b) edge (c)
               (10.5,5.5) node (xb)   {} edge (x) edge (b)
               (6,2) node (az)   {} edge (a) edge (z)
               (8,1.5) node (ab)   {} edge (a) edge (b)
               (10,2) node (bz)   {} edge (b) edge (z);
 \end{tikzpicture}
\caption{Subfolds of depth 1 and 2.}
\label{fig:subfolds}
\end{figure}
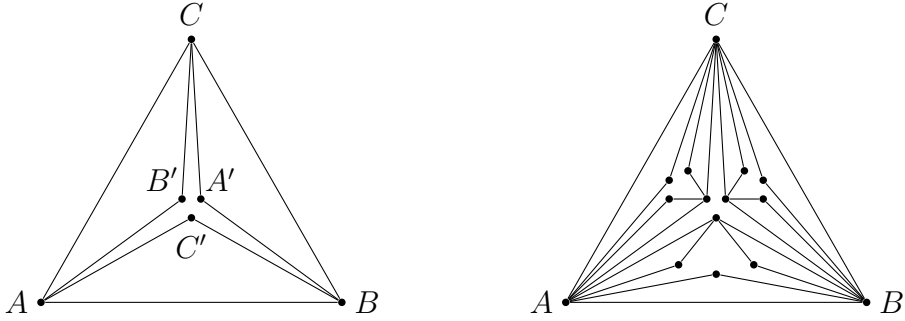

Given a 2-tree $G$, let $R$ be a tree with $V(R)=\triangle(G)$ rooted at $abc\in\triangle(G)$.
Specification of a root determines the standard parent-child relation on $\triangle(G)$.
We call $R$ a \emph{subrooting tree} for $G$ if every $xyz\in\triangle(G)$ has at most
three children and those are of the form $x'yz$, $xy'z$, $xyz'$ for some $x',y',z'\in V(G)$.

Note that removal of the root $abc$ from $R$ splits it into three subtrees
$R_{ab}$, $R_{bc}$, and $R_{ac}$ (some may be empty) such that $V(R_{ab})=\triangle(G_{ab,c})$, 
$V(R_{bc})=\triangle(G_{bc,a})$, and $V(R_{ac})=\triangle(G_{ac,b})$.
A similar fact holds true for any rooted subtree of $R$
(a rooted subtree is formed by its root $xyz$ and all descendants of $xyz$).
This observation makes evident that each computational path of \fold{}{G,abc,ABC}
is determined by a subrooting tree $R$ rooted at $abc$ and a depth-$k$ subfold $S$
of the triangle $ABC$ (where $k$ is equal to the height of $R$):
whenever the subroutine \fold{}{H,xyz,XYZ} is invoked for some 2-subtree $H$ of $G$,
we choose the vertices $x',y',z'$ and the points $X',Y',Z'$ so that
$x'yz$, $xy'z$, $xyz'$ are the children of $xyz$ in $R$, and
the geometric triangles $X'YZ$, $XY'Z$, $XYZ'$ form the subfold of $XYZ$ in $S$
(more precisely, in the fragment of $S$ of the corresponding depth).
We will denote this path of the procedure \fold{}{G,abc,ABC} by \fold{}{R,S}.

Our goal is now to establish the following fact.

\begin{lemma}\label{lem:foldfree}
Every collinear set of vertices in a folded drawing is free.
\end{lemma}

We will derive Lemma \ref{lem:foldfree} from an elementary geometric fact
stated below as Lemma \ref{lem:foldfree2},
but first we make a few useful observations and introduce some technical notions.

Note that a depth-$k$ subfold is, in a sense, unique and rigid.
More precisely, any two subfolds of the same depth are equivalent up to homeomorphisms
of the plane
(as usually, it is supposed that a homeomorphism takes vertices of geometric graphs
to vertices and edges to edges). Moreover, if $S$ is a depth-$k$ subfold of a triangle $ABC$
and a homeomorphism $h$ takes $S$ onto itself and fixes the vertices $A$, $B$, and $C$
(i.e., $h(A)=A$ etc.), then $h$ fixes every vertex of $S$.
It follows that, once we identified the vertices of the triangle by the labels
$A$, $B$, and $C$, each other vertex $X$ of $S$ can be identified by a \emph{canonical label} $c(X)$. 
Formally, a \emph{canonical labeling} of a depth-$k$ subfold $S$ is an injective map $c\function{V(S)}L$,
where a set of labels $L$ contains the labels $A$, $B$, and $C$ that are assigned to the vertices
of the outer triangle, such that any homeomorphism between two subfolds respecting the labels $A,B,C$
must respect the labels of all vertices.

A canonical labeling can be defined explicitly in many, essentially equivalent, ways.
To be specific, we can fix the following definition.
Let $S$ be a depth-$k$ subfold of a triangle $ABC$ obtained by a sequence
$ABC=S_0,S_1,\ldots,S_k=S$, where $S_{i+1}$ is obtained from $S_i$ by subfolding
each inner triangular face. Note that the sequence $S_1,\ldots,S_k$ is reconstructible
from $S$. For example, the vertex $C'\ne C$ that is adjacent to $A$ and $B$ in $S_1$
is determined by the condition that $ABC'$ is the second largest triangle in
the inclusion-chain of triangles in $S$ containing $AB$.
Now, we inductively extend the canonical labeling $c$ from $V(S_0)$ to $V(S)$
as follows:
if $XYZ$ is a triangular face of $S_{i+1}$ and $X\in V(S_{i+1})\setminus V(S_{i})$
(hence $Y,Z\in V(S_{i})$), then $c(X)=(i+1,c(YZ))$.

Furthermore, let $\ell$ be a line such that no edge of $S$ lies on it.
By $S\cap\ell$ we will denote the set of common points of $S$ and $\ell$,
consisting of the vertices of $S$ lying on $\ell$ and the points of intersection
of $\ell$ with edges of $S$. In addition to the canonical labeling of the vertices
of $S$, we label each intersection point of $\ell$ with an edge $YZ$ of $S$
by $c(YZ)$. This allows us to define the \emph{intersection pattern}
of $S$ and $\ell$ to be the sequence of the labels of all points in $S\cap\ell$
as they appear along $\ell$ (thus, this sequence is defined up to reversal).

Given two triangles $ABC$ and $A'B'C'$, we will identify the matching vertex names,
that is, $A$ and $A'$, $B$ and $B'$, and $C$ and $C'$.

\begin{lemma}\label{lem:foldfree2}
Suppose that a line $\ell$ intersects a triangle $ABC$ in at least 2 points
and the same is true for a line $\ell'$ and a triangle $A'B'C'$.
Moreover, let $ABC\cap\ell$ and $A'B'C'\cap\ell'$ have the same intersection
pattern (for example, if $\ell$ passes through $A$ and crosses $BC$, then
$\ell'$ passes through $A'$ and crosses $B'C'$).
Consider an arbitrary depth-$k$ subfold $S$ of $ABC$ and an arbitrary
set of points $P$ within $A'B'C'$ such that
$A'B'C'\cap\ell'\subset P\subset\ell'$ and $|P|=|S\cap\ell|$.
Then there exists a depth-$k$ subfold $S'$ of $A'B'C'$ such that
$S'\cap\ell'=P$ and, moreover, $S'\cap\ell'$ has the same intersection pattern
as $S\cap\ell$.
\end{lemma}

\begin{proof}
We proceed by induction on $k$. The base step of $k=1$ is a trivial
geometric graph, see Fig.~\ref{fig:basecase}.
\begin{figure}
\centering
 \begin{tikzpicture}[scale=.5,every node/.style={circle,fill,inner sep=.5pt}]
\draw (-0.5,1.5) -- (8.5,1.5);
  \path (0,0) node (a)   {}
                 (8,0) node (b)   {} edge (a)
             (4,4) node (c)   {} edge (a) edge (b)
             (4.5,1.75) node (ab)   {} edge (a) edge (b)
             (2.25,1.3) node (ac)   {} edge (a) edge (c)
             (5.5,1.7) node (bc)   {} edge (b) edge (c);
 \end{tikzpicture}
\hfill
 \begin{tikzpicture}[scale=.5,every node/.style={circle,fill,inner sep=.5pt}]
\draw (-0.5,1.5) -- (8.5,1.5);
  \path (0,0) node (a)   {}
                 (8,0) node (b)   {} edge (a)
             (4,4) node (c)   {} edge (a) edge (b)
             (2.6,2) node (ab)   {} edge (a) edge (b)
             (1.5,1.3) node (ac)   {} edge (a) edge (c)
             (4.75,1.7) node (bc)   {} edge (b) edge (c);
 \end{tikzpicture}
\hfill
 \begin{tikzpicture}[scale=.5,every node/.style={circle,fill,inner sep=.5pt}]
\draw (-0.5,1.5) -- (8.5,1.5);
  \path (0,0) node (a)   {}
                 (8,0) node (b)   {} edge (a)
             (4,4) node (c)   {} edge (a) edge (b)
             (4.4,2.9) node (ab)   {} edge (a) edge (b)
             (1.75,1.3) node (ac)   {} edge (a) edge (c)
             (6.1,1.7) node (bc)   {} edge (b) edge (c);
 \end{tikzpicture}
\caption{Base case in the proof of Lemma \protect\ref{lem:foldfree2}}
\label{fig:basecase}
\end{figure}
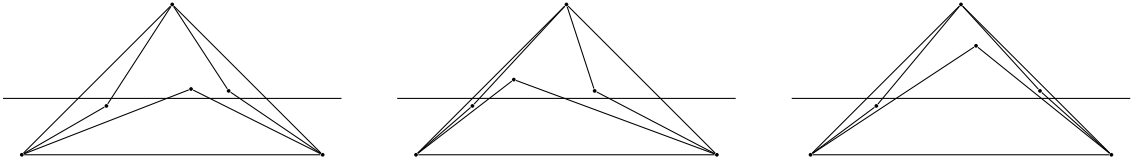
Suppose that $k>1$. Then $S$ is obtained from a depth-1 subfold $S_1$ of $ABC$
by depth-$(k-1)$ subfolding each of its three triangular faces. Label
the points of $P$ according to the intersection pattern of $S\cap\ell$.
Let $P^\circ\subseteq P$ consist of the points whose labels appear in $S_1\cap\ell$.
Since the lemma is true in the base case, there is a subfold $S'_1$ of $A'B'C'$
such that $S'_1\cap\ell'=P^\circ$ and the intersection pattern of $S'_1\cap\ell'$
agrees with the labeling of $P^\circ$.
Let $F_1,F_2,F_3$ be the triangular faces of $S_1$ and $F'_1,F'_2,F'_3$
the corresponding triangular faces of $S'_1$.
For each $i=1,2,3$, let $P_i$ be the segment of $P$ inside $F'_i$
(it may be empty for some $i$). There remains to apply the induction hypothesis
to the lines $\ell$ and $\ell'$, the triangles bounding $F_i$ and $F'_i$,
the depth-$(k-1)$ subfold of $F_i$ induced by $S$, and the point set $P_i$,
for each $i=1,2,3$.
\end{proof}

\begin{proofof}{Lemma \ref{lem:foldfree}}
Given a folded drawing $\sigma$ and a line $\ell$, we have to show that
$V(\sigma)\cap\ell$ is free. Let $ABC$ be the boundary of the outer face of $\sigma$.
It suffices to consider the case when $\sigma$ is a depth-$k$ subfold of $ABC$,
for some $k$ (completing $\sigma$, if necessary, we will prove even a stronger fact).
Using the notion of a canonical labeling, our task can be stated as follows:
Given an arbitrary set $P'$ of $|V(\sigma)\cap\ell|$ points on a line $\ell'$,
we have to find a depth-$k$ subfold $\sigma'$ of some triangle $A'B'C'$
such that $V(\sigma')\cap\ell'=P'$ and the intersection patterns of $\sigma\cap\ell$
and $\sigma'\cap\ell'$ agree on $V(\sigma)\cap\ell$ and $V(\sigma')\cap\ell'$.
The solvability of this task follows from Lemma~\ref{lem:foldfree2}.
\end{proofof}

\subsection{Folded drawings with many collinear leaf vertices}\label{ss:fold1}

Let $G$ be a 2-tree. We will call $v\in V(G)$ a \emph{leaf vertex} 
if it has degree 2. The two edges emanating from $v$ will be referred to
as \emph{leaf edges}.
A triangle containing a leaf vertex will be called a \emph{leaf triangle}. 
The number of leaf triangles in $\triangle(G)$ 
will be denoted by $t_1=t_1(G)$. 
Our nearest goal is to show that there is a folded drawing of $G$
with at least $\frac23t_1$ collinear leaf vertices.
The graph all whose triangles share an edge can be excluded from
consideration, as it can be drawn with all leaf vertices on a line.
This is the only case when a graph has only leaf triangles.
In the sequel we will, therefore, assume that $G$ contains at least one non-leaf triangle.
We will describe a specification of the procedure \foldn{}
that, additionally to a 2-tree $G$, takes on input a line $\ell$
and produces a folded drawing of $G$ such that $\ell$ crosses
each non-leaf triangle in two edges and passes through a leaf vertex
whenever possible.

\bigskip

\shift\fold1{G,\ell}\\[-7mm]
\begin{itemize}
\item
choose a non-leaf triangle $abc\in\triangle(G)$, a geometric triangle
$ABC$ with two sides crossed by $\ell$, and execute \fold{}{G,abc,ABC}
obeying the following additional conditions.
Whenever a subroutine \fold{}{H,xyz,XYZ} is invoked for a 2-subtree $H$ of $G$,
\begin{itemize}
\item
if possible,
the root $xyz$ for $H$ should be a non-leaf triangle in $G$.
In this case the geometric triangle $XYZ$ should be drawn so that two sides of it
are crossed by $\ell$;
\item
if not (i.e., $\triangle(H)$ contains only leaf triangles of $G$),
the leaf vertex of $xyz$ should be put on $\ell$ whenever this is possible.
\end{itemize}
\end{itemize}

\smallskip

\begin{lemma}\label{lem:fold1}
The procedure \fold1{G,\ell} can be run so that it produces a folded drawing
of $G$ with at least $\frac23t_1(G)$ leaf vertices lying on the line~$\ell$.
\end{lemma}

\begin{proof}
Given a non-leaf edge $e$ of $G$, let $\omega(e)$ denote the number of leaf
triangles containing $e$. For all $e$ we initially set $\cross e=1$, and
in the course of execution of \fold1{G,\ell} we reset this value to $\cross e=1$
once this edge is drawn and crosses $\ell$.
Note that, if $v$ is a leaf vertex in a leaf triangle $vuw$
and $\cross{uw}=1$, then \fold1{G,\ell} puts $v$ on $\ell$
(like any other leaf vertex adjacent to $u$ and $w$).
It follows that the procedure puts on $\ell$ at least
\begin{equation}\label{eq:sum}
\sum_e\cross e\,\omega(e)
\end{equation}
leaf vertices, where the summation goes over all non-leaf edges.
Thus, we have to show that this sum can be made large.

To this end, consider a randomized version of \fold1{G,\ell}.
Recall that the procedure begins with drawing a non-leaf triangle $abc\in\triangle(G)$ 
so that exactly two edges of it are crossed by $\ell$.
We can choose the pair of crossed edges in three ways and do it at random.
Consider now a recursive step where we have to draw a non-leaf triangle
$xyz$ whose edge $xy$ is already drawn and the vertex $z$ still not.
If $\cross{xy}=0$, then the rules of \fold1{G,\ell} force locating
$z$ so that $\cross{xz}=\cross{yz}=1$. If $\cross{xy}=1$, then exactly one
of $xz$ and $yz$ can (and must) be crossed by $\ell$. In this case
we make this choice again at random.
A simple induction (on the distance of $e$ from $abc$ in $T_G$)
shows that each non-leaf edge $e$ is crossed by $\ell$ with probability $2/3$.
By linearity of expectation, the mean value of the sum \refeq{sum} is equal to
$$
\frac23\,\sum_e\omega(e)=\frac23\,t_1(G).
$$
Therefore, the randomized version of \fold1{G,\ell} with nonzero probability puts
at least $\frac23t_1$ leaf vertices on $\ell$.
It readily follows that at least one computational path of \fold1{G,\ell}
produces a folding drawing of $G$ with at least $\frac23t_1$ leaf vertices on~$\ell$.
\end{proof}

\subsection{Folded drawings with many collinear interposed vertices}\label{ss:fold2}

Two triangles of a 2-tree $G$ will be called \emph{neighbors}
if they share an edge. Call a triangle $T\in\triangle(G)$ \emph{linking}
if it has exactly two neighbors and they are edge-disjoint.
In other words, $T$ does not share one of its edges with any other triangle
and shares each of the other two edges with exactly one triangle.
The number of linking triangles in $G$ will be denoted by $t_2=t_2(G)$.
A maximal sequence of unequal linking triangles $T_1,\ldots,T_k$ where $T_i$ and $T_{i+1}$
are neighbors for all $i<k$ will be called a \emph{chain} of triangles in $G$.
Here, \emph{maximal} means that the chain cannot be extended to a longer sequence
with the same properties. Note that different chains are disjoint.
If $k=1$, then the two neighbors of $T_1$, $T'$ and $T''$, are non-linking.
If $k>1$, then $T_1$ has exactly one non-linking neighbor $T'$, and
$T_k$ has exactly one non-linking neighbor $T''$. In either case, we say that
$T'$ and $T''$ are \emph{connected} by the chain.

\begin{lemma}\label{lem:chains}
A 2-tree $G$ with $t$ triangles has less than $t-t_2$ chains.
\end{lemma}

\begin{proof}
Define a graph $H$ on the set of all non-linking triangles of $G$ so that
two triangles are adjacent in $H$ if they are connected by a chain in $G$.
Thus, $H$ has $t-t_2$ vertices and exactly as many edges as there are chains
in $G$. It remains to notice that $H$ is acyclic (otherwise $T_G$ would contain
a cycle).
\end{proof}

The particular case of a 2-tree consisting of a single chain of length $t$
deserves a special attention. Note that, if $t\le3$, such a graph unique
up to isomorphism. Furthermore, there are two isomorphism types if $t=4$
and four isomorphism types if $t=5$. For the latter case, all possibilities
are shown in Fig.~\ref{fig:groups}, where we see also
folded drawings of the four graphs that have some useful 
properties, as stated below.

\begin{figure}
\centering
 \begin{tikzpicture}[scale=.6,every node/.style={circle,fill,inner sep=.75pt}]
  \path (0,0) node (a)   {}
                 (-2,0) node (b)   {} edge (a)
             (-1,1) node (c)   {} edge (a) edge (b)
             (1,1) node (d)   {} edge (a) edge (c)
             (2,0) node (e)   {} edge (a) edge (d)
             (1,-1) node (f)   {} edge (a) edge (e)
             (-1,-1) node (g)   {} edge (a) edge (f);
  \draw[dashed] (-1.5,.5) -- (-.5,.5) -- (d) -- (1,0) -- (.5,-.5) -- (0,-1);
  \path[shift={(-2,-4)},scale=.5] (8,0) node (a)   {}
                 (-0,0) node (b)   {} edge (a)
             (4,4) node (c)   {} edge (a) edge (b)
             (2.5,2) node (d)   {} edge (a) edge (c)
             (4,3.5) node (e)   {} edge (a) edge (d)
             (3.5,2.5) node (f)   {} edge (a) edge (e)
             (5.5,1.75) node (g)   {} edge (a) edge (f);
  \draw[shift={(-1.25,-4)},scale=.5] (-1,2) -- (6,2);
 \end{tikzpicture}
\hfill
 \begin{tikzpicture}[scale=.6,every node/.style={circle,fill,inner sep=.75pt}]
  \path (0,0) node (a)   {}
                 (-2,0) node (b)   {} edge (a)
             (-1,1) node (c)   {} edge (a) edge (b)
             (1,1) node (d)   {} edge (a) edge (c)
             (2,0) node (e)   {} edge (a) edge (d)
             (1,-1) node (f)   {} edge (a) edge (e)
             (3,-1) node (g)   {} edge (e) edge (f);
  \draw[dashed] (-1.5,.5) -- (-.5,.5) -- (d) -- (1,0) -- (1.5,-.5) -- (2,-1);
  \path[shift={(-2,-4)},scale=.5] (8,0) node (a)   {}
                 (-0,0) node (b)   {} edge (a)
             (4,4) node (c)   {} edge (a) edge (b)
             (2.5,2) node (d)   {} edge (a) edge (c)
             (4,3.5) node (e)   {} edge (a) edge (d)
             (4.75,1.5) node (f)   {} edge (a) edge (e)
             (4.83,2.29) node (g)   {} edge (e) edge (f);
  \draw[shift={(-1.25,-4)},scale=.5] (-1,2) -- (6,2);
 \end{tikzpicture}
\hfill
 \begin{tikzpicture}[scale=.6,baseline=-2.45cm,every node/.style={circle,fill,inner sep=.75pt}]
  \path (0,0) node (a)   {}
                 (-2,0) node (b)   {} edge (a)
             (-1,1) node (c)   {} edge (a) edge (b)
             (1,1) node (d)   {} edge (a) edge (c)
             (2,0) node (e)   {} edge (a) edge (d)
             (3,1) node (f)   {} edge (d) edge (e)
             (4,0) node (g)   {} edge (e) edge (f);
  \draw[dashed] (-1.5,.5) -- (-.5,.5) -- (d) -- (1,0)  (d) -- (2.5,.5) -- (3,0);
  \path[shift={(-1,-4)},scale=.5] (8,0) node (a)   {}
                 (-0,0) node (b)   {} edge (a)
             (4,4) node (c)   {} edge (a) edge (b)
             (2.5,2) node (d)   {} edge (a) edge (c)
             (4,3.5) node (e)   {} edge (a) edge (d)
             (5,1.5) node (f)   {} edge (d) edge (e)
             (3.65,2.45) node (g)   {} edge (e) edge (f);
  \draw[shift={(-.25,-4)},scale=.5] (-1,2) -- (6,2);
 \end{tikzpicture}
\hfill
 \begin{tikzpicture}[scale=.6,baseline=-2.45cm,every node/.style={circle,fill,inner sep=.5pt}]
  \path (0,0) node (a)   {}
                 (-2,0) node (b)   {} edge (a)
             (-1,1) node (c)   {} edge (a) edge (b)
             (1,1) node (d)   {} edge (a) edge (c)
             (2,0) node (e)   {} edge (a) edge (d)
             (3,1) node (f)   {} edge (d) edge (e)
             (2,2) node (g)   {} edge (d) edge (f);
  \draw[dashed] (-1.5,.5) -- (.5,.5) -- (e) -- (2,1) -- (2.5,1.5);
  \path[shift={(-1,-4)},scale=.5] (8,0) node (a)   {}
                 (-0,0) node (b)   {} edge (a)
             (4,4) node (c)   {} edge (a) edge (b)
             (2.5,2) node (d)   {} edge (a) edge (c)
             (5,2.3) node (e)   {} edge (a) edge (d)
             (6,1) node (f)   {} edge (d) edge (e)
             (3.55,1.9) node (g)   {} edge (d) edge (f);
  \draw[shift={(-1,-4)},scale=.5] (-.5,.1) -- (6.5,2.9);
 \end{tikzpicture}
\caption{Chains of length 5 and their folded drawings.
The dashed polyline shows the trace of the line $\ell$ in an unfolded drawing.}
\label{fig:groups}
\end{figure}
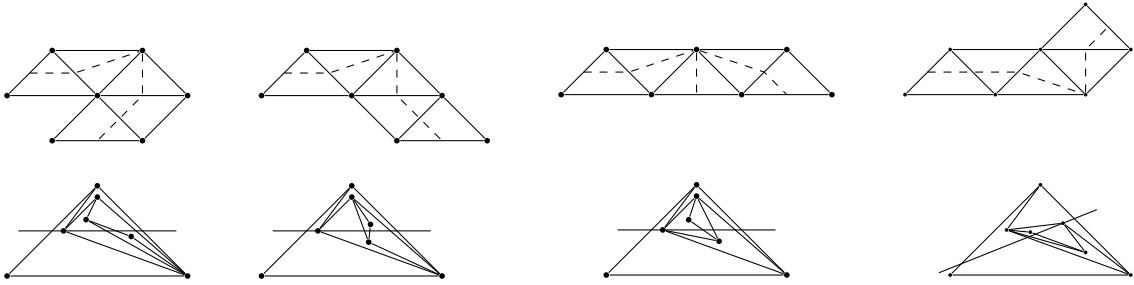

\begin{lemma}\label{lem:5chain}
Let $G$ be a 2-tree consisting of a single chain of length 5.
Given a line $\ell$, there is a folded drawing of $G$ where
$\ell$ passes through one vertex and crosses two edges in both end triangles.
\end{lemma}

Call a vertex \emph{interposed} if it belongs only to linking triangles.
We now describe a specification of the procedure \foldn{}
that aims to produce a folded drawing of a 2-tree $G$ with many 
interposed vertices on a line~$\ell$.

\medskip

\shift\fold2{G,\ell}\\[-6.5mm]
\begin{itemize}
\item
Root $G$ at a non-leaf and non-linking triangle $abc\in\triangle(G)$.
Note that, whatever subrooting tree 
\edit{recall the notion?}
is used, each chain will appear in it
as a path in the direction from the root $abc$ upwards.
Split each chain into \emph{groups} of 5 successive linking triangles,
allowing the last group to be \emph{incomplete} (i.e., have up to 4 triangles).
\item
Whenever a subroutine \fold{}{H,xyz,XYZ} is invoked for a 2-subtree $H$ of $G$,
the following rules have to be obeyed.
\begin{itemize}
\item
If possible, the root $xyz$ for $H$ should be a non-leaf triangle in $G$.
\item
If $xyz$ is neither a leaf nor a linking triangle in $G$,
then $\ell$ should cross two sides of $XYZ$. The same applies to any linking
triangle in an incomplete group.
\item
For each complete group of linking triangles, its intersection pattern
with $\ell$ should have the properties claimed by Lemma \ref{lem:5chain}.
This is always possible, irrespectively of the crossing pattern of the
triangle $x'y'z'$ preceding this group in the subrooting tree, see Fig.~\ref{fig:firstcross}
(note that $x'y'z'$ cannot be a leaf triangle and, hence, is crossed by $\ell$
in two edges).
\end{itemize}
\end{itemize}

\smallskip

\begin{figure}
\centering
 \begin{tikzpicture}[scale=.6,every node/.style={circle,fill,inner sep=.75pt}]
  \path (0,0) node (a)   {}
                 (-2,0) node (b)   {} edge (a)
             (-1,1) node (c)   {} edge (a) edge (b)
             (1,1) node (d)   {} edge (a) edge (c)
             (2,0) node (e)   {} edge (a) edge (d)
          (-1.5,.5) node[coordinate,label=above left:\small$T'$] (T0) {}
          (0,1) node[coordinate,label=above:\small$T_1$] (T1) {}
          (1.5,.5) node[coordinate,label=above right:\small$T_2$] (T2) {};
  \draw[densely dashed] (-1,0) -- (-.5,.5) -- (.5,.5);
 \end{tikzpicture}
\qquad\qquad
 \begin{tikzpicture}[scale=.6,every node/.style={circle,fill,inner sep=.75pt}]
  \path (0,0) node (a)   {}
                 (-2,0) node (b)   {} edge (a)
             (-1,1) node (c)   {} edge (a) edge (b)
             (1,1) node (d)   {} edge (a) edge (c)
             (2,0) node (e)   {} edge (a) edge (d)
          (-1.5,.5) node[coordinate,label=above left:\small$T'$] (T0) {}
          (0,1) node[coordinate,label=above:\small$T_1$] (T1) {}
          (1.5,.5) node[coordinate,label=above right:\small$T_2$] (T2) {};
  \draw[densely dashed] (-1,0) -- (-1.5,.5) (0,1) -- (.5,.5);
 \end{tikzpicture}
\caption{$T_1,T_2,\ldots,T_5$ is a group of linking triangles in a chain.
In any case, $T_1$ can be drawn so that $\ell$ crosses also the common edge of $T_1$ and~$T_2$.}
\label{fig:firstcross}
\end{figure}
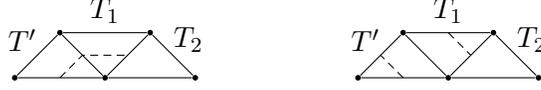

\begin{lemma}\label{lem:fold2}
Let $G$ be a 2-tree with $n$ vertices and $t_2$ linking triangles.
The procedure \fold2{G,\ell} produces a folded drawing
of $G$ with more than $t_2-\frac45n$ interposed vertices lying on the line~$\ell$.
\end{lemma}

\begin{proof}
Denote the number of interposed vertices on $\ell$ by $l$.
The rules of \foldn2 for drawing chains ensure that
$l$ is equal to the number of complete groups of linking triangles in $G$.
Let $c$ denote the number of chains in $G$, which is the trivial upper bound
for the number of incomplete groups. We, therefore, have
$$
t_2\le 5l+4c.
$$
By Lemma \ref{lem:chains}, we also have
$$
c\le n-t_2-3.
$$
It follows that $5l\ge5t_2-4n+12$, yielding the desired bound.
\end{proof}

\subsection{The rest of the proof}

Notice that the rules of \foldn1 and \foldn2 are coherent
and we can consider a hybrid procedure \fold{1+2}{G,\ell}.
This procedure aims at locating on $\ell$ as many leaf vertices
as \fold{1}{G,\ell} does and as many interposed vertices as
\fold{2}{G,\ell} does.
The latter goal is achieved by adopting the instructions of \foldn2
for drawing chains. The former goal is achieved, like \foldn1, 
by randomization. In order to ensure that, with nonzero probability,
at least $\frac23t_1(G)$ vertices are put on $\ell$,
we need to fulfill the following condition:
\begin{enumerate}
\item[(*)]
if $T$ is neither a leaf triangle nor a linking triangle in a complete group, then
each edge of $T$ is crossed by $\ell$ with probability~$\frac23$.
\end{enumerate}
The exceptional treatment of linking triangles
does not decrease the chances of any leaf vertex to be put on $\ell$.
Indeed, either a linking triangle $T$ has no leaf triangle in the neighborhood or
$T$ is the end triangle in a chain neighboring with a leaf triangle $T'$.
In the latter case the common edge of $T$ and $T'$ can be crossed, see Fig.~\ref{fig:groups}
(hence, the leaf vertex of $T'$ will be put on~$\ell$).

However, some care is needed to fulfill Condition (*)
on leaving the last complete group $T_1,T_2,T_3,T_4,T_5$ in a chain. Notice that
in each of the four cases shown in Fig.~\ref{fig:groups} we have two choices for an edge of the
end triangle $T$ that will be crossed by $\ell$. We make this choice
at random, with probability distribution shown in Fig.~\ref{fig:leaving}.
This ensures (*) for the next neighbor $T''$ of~$T_5$.

\begin{figure}
\centering
 \begin{tikzpicture}[scale=.6,every node/.style={circle,fill,inner sep=.75pt}]
  \path (0,0) node (a)   {}
                 (-2,0) node (b)   {} edge (a)
             (-1,1) node (c)   {} edge (a) edge (b)
             (1,1) node (d)   {} edge (a) edge (c)
             (2,0) node (e)   {} edge (a) edge (d)
          (-1.5,.5) node[coordinate,label=above left:\small$T_4$] (T0) {}
          (0,1) node[coordinate,label=above:\small$T_5$] (T1) {}
          (1.5,.5) node[coordinate,label=above right:\small$T''$] (T2) {};
  \path (0,2) node[coordinate,label=below:with probability $1/3$] (x)   {};
  \draw[densely dashed] (-.5,.5) -- (0,1);
 \end{tikzpicture}
\qquad\qquad
 \begin{tikzpicture}[scale=.6,every node/.style={circle,fill,inner sep=.75pt}]
  \path (0,0) node (a)   {}
                 (-2,0) node (b)   {} edge (a)
             (-1,1) node (c)   {} edge (a) edge (b)
             (1,1) node (d)   {} edge (a) edge (c)
             (2,0) node (e)   {} edge (a) edge (d)
          (-1.5,.5) node[coordinate,label=above left:\small$T_4$] (T0) {}
          (0,1) node[coordinate,label=above:\small$T_5$] (T1) {}
          (1.5,.5) node[coordinate,label=above right:\small$T''$] (T2) {};
  \path (0,2) node[coordinate,label=below:with probability $2/3$] (x)   {};
  \draw[densely dashed] (-.5,.5) -- (.5,.5);
 \end{tikzpicture}\\[-20mm]\mbox{}
\caption{On leaving the last complete group in a chain.}
\label{fig:leaving}
\end{figure}
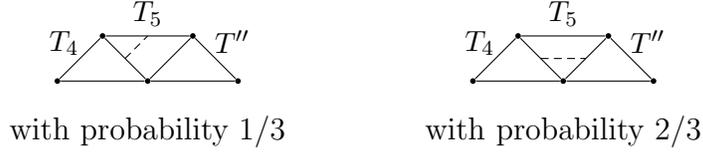

Thus, both Lemmas \ref{lem:fold1} and \ref{lem:fold2} apply as well to \foldn{1+2}
and, therefore, this procedure enables locating more than
\begin{equation}\label{eq:bound12}
\frac23\,t_1+t_2-\frac45\,n
\end{equation}
vertices of $G$ on~$\ell$.
Our further analysis is based on the following lemma.

\begin{lemma}\label{lem:t1t2}
If $G$ is a 2-tree with $n\ge3$ vertices, $t_1$ leaf triangles, and $t_2$ linking triangles,
then $4t_1+t_2>n$.
\end{lemma}

\begin{proof}
Denote the number of all triangles in $G$ by $t$ and recall that $t=n-2$.
Thus, we have to prove that
\begin{equation}\label{eq:t1t2}
4t_1+t_2\ge t+3.
\end{equation}
We proceed by induction on $t$. If $t=1$, that is, $G$ is a triangle, 
the inequality is true. Suppose that $t\ge2$.

If $G$ has a linking triangle $uvw$, where the edge $uv$ is not shared with
any other triangle, let $G'$ be obtained from $G$ by contraction of $uv$.
For the corresponding parameters of $G'$, we have $t'=t-1$, $t'_2=t_2-1$,
and $t'_1=t_1$. Inequality \refeq{t1t2} readily follows from the induction assumption.

Assume now that $t_2=0$. Consider an arbitrary leaf triangle $uvw$, with $w$ being a leaf
triangle. Our analysis is split into a few cases, see Fig.~\ref{fig:proofcases}.

\Case 1{$uvw$ has a single neighbor $uvz$.}
Since \refeq{t1t2} is true if $G$ is the diamond graph, we suppose that $n\ge5$.

\Subcase{1-a}{$uvw$ shares both edges $uz$ and $vz$ with other triangles.}
Let $G'=G-w$. We have $t'=t-1$ and $t'_1=t_1-1$. Note that the triangle $uvz$
can become linking; then we will have $t'_2=1$. The induction assumption applied to $G'$
gives us $4t_1\ge t+5$. This inequality is even stronger than \refeq{t1t2}, as $t_2=0$.

\Subcase{1-b}{$uvw$ does not share one of its edges, say $uz$, with any other triangle.}
Let $G'=G\setminus\{u,w\}$, so that $t'=t-2$. Since the triangle $uvz$ is not linking, it shares the edge
$zv$ with $k\ge2$ neighbors. This implies that $t'_1=t_1-1$. If $k=2$, the two triangles sharing $zv$ with $uvz$
can become linking in $G'$, and we will have $t'_2=2$. The induction assumption applied to $G'$
gives us $4t_1\ge t+3$, which is the same as \refeq{t1t2} because $t_2=0$.

\Case 2{$uvw$ has $k\ge2$ neighbors.} 
Let $G'=G-w$. Clearly, $t'=t-1$ and $t'_1=t_1-1$.
If $k>2$, we have $t'_2=0$. If $k=2$, it can happen that the two triangles
sharing $uv$ with $uvw$ become linking in $G'$ raising the value of $t'_2$ to 2.
Again, \refeq{t1t2} follows from the induction assumption applied to~$G'$.
\end{proof}

\begin{figure}
\centering
 \begin{tikzpicture}[scale=.6,every node/.style={circle,fill,inner sep=1pt}]
  \path (0,0) node[label=left:$z$] (z)   {}
        (2,0) node[label=right:$v$] (v)   {} edge (z)
        (1,2) node[label=above:$u$] (u)   {} edge (z) edge (v)
        (3,2) node[label=right:$w$] (w)   {} edge (u) edge (v)
        (-1,2) node (y)   {} edge (u) edge (z)
        (1,-2) node[label=below:Case 1-a] (w)   {} edge (z) edge (v);
 \end{tikzpicture}
\qquad\qquad
 \begin{tikzpicture}[scale=.6,every node/.style={circle,fill,inner sep=1pt}]
  \path (0,0) node[label=left:$z$] (z)   {}
        (2,0) node[label=right:$v$] (v)   {} edge (z)
        (1,2) node[label=above:$u$] (u)   {} edge (z) edge (v)
        (3,2) node[label=right:$w$] (w)   {} edge (u) edge (v)
        (1,-1) node (y)   {} edge (v) edge (z)
        (1,-2) node[label=below:Case 1-b] (w)   {} edge (z) edge (v);
 \end{tikzpicture}
\qquad\qquad
 \begin{tikzpicture}[baseline=-2.9cm,scale=.6,every node/.style={circle,fill,inner sep=1pt}]
  \path (0,0) node (z)   {}
        (2,0) node[label=right:$v$] (v)   {} edge (z)
        (1,2) node[label=above:$u$] (u)   {} edge (z) edge (v)
        (3,2) node[label=right:$w$] (w)   {} edge (u) edge (v)
        (.8,.6) node (y)   {} edge (v) edge (u)
        (1,-2.2) node[coordinate,label=below:Case 2] (w)   {};
 \end{tikzpicture}\\[-10mm]\mbox{}
\caption{Proof of Lemma \protect\ref{lem:t1t2} (illustrative fragments of $G$).}
\label{fig:proofcases}
\end{figure}
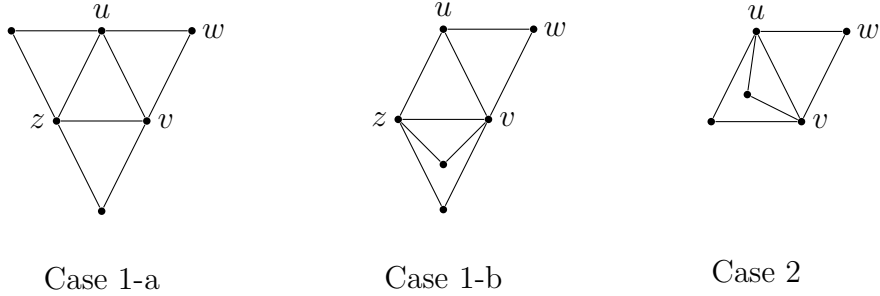

An example in Fig.~\ref{fig:4best} shows that the factor of 4 in the bound
\refeq{t1t2} cannot be improved.

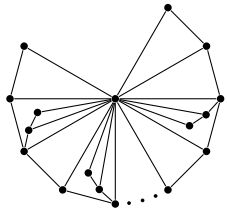
\begin{figure}[b]
\centering
 \begin{tikzpicture}[every node/.style={circle,fill,inner sep=1pt}]
  \path[scale=.7] (0,0) node (o)   {}
                (150:2cm) node (a)   {} edge (o)
                (180:2cm) node (b)   {} edge (o) edge (a)
                (-150:2cm) node (c)   {} edge (o) edge (b)
                (-160:1.75cm) node (c1)   {} edge (o) edge (c)
                (-170:1.5cm) node (c2)   {} edge (o) edge (c1)
                (-120:2cm) node (d)   {} edge (o) edge (c)
                (-90:2cm) node (e)   {} edge (o) edge (d)
                (-100:1.75cm) node (e1)   {} edge (o) edge (e)
                (-110:1.5cm) node (e2)   {} edge (o) edge (e1)
                (-60:2cm) node (f)   {} edge (o)
                (-82.5:2cm) node[inner sep=.5pt] (f1)   {}
                (-75:2cm) node[inner sep=.5pt] (f2)   {}
                (-67.5:2cm) node[inner sep=.5pt] (f3)   {}
                (-30:2cm) node (g)   {} edge (o) edge (f)
                (0:2cm) node (h)   {} edge (o) edge (g)
                (-10:1.75cm) node (h1)   {} edge (o) edge (h)
                (-20:1.5cm) node (h2)   {} edge (o) edge (h1)
                (30:2cm) node (i)   {} edge (o) edge (h)
                (60:2cm) node (j)   {} edge (o) edge (i);
 \end{tikzpicture}
\caption{A 2-tree with  $t=4k+2$ triangles, for which $t_1=k+2$ and $t_2=0$
(the parameter $k$ can take an arbitrary value).}
\label{fig:4best}
\end{figure}

Turning back to the proof of Theorem \ref{thm:2trees}, suppose first that $t_2\le\frac45n$.
In this case the bound \refeq{bound12} has no advantage upon the performance of
\fold1{G,\ell}, that puts at least $\frac23t_1$ vertices on $\ell$.
By Lemma \ref{lem:t1t2}, we then have $t_1>\frac{n-t_2}4\ge\frac n{20}$,
and hence $\ell$ passes trough more than $n/30$ vertices.
If $t_2>\frac45n$, the procedure \foldn{1+2} is preferable and yields
$$
\frac23\,t_1+t_2-\frac45\,n>\frac{n-t_2}6+t_2-\frac45\,n=\frac{25t_2-19n}{30}>\frac n{30}.
$$
collinear vertices. The proof is complete.

\section{Questions and comments}\label{s:open}
\mbox{}

\que
How far or close are parameters $\free G$ and $\lin G$?
It seems that a priori we even cannot exclude equality.
To clarify this question, it would be helpful to (dis)prove that every collinear set 
in any straight line drawing is free.

\que
We constructed examples of graphs with $\free G\le\lin G\le O(n^{\sigma+\epsilon})$
for a graph-theoretic constant $\sigma$, for which it is known that
$0.753<\sigma<0.99$. Are there graphs with $\lin G=O(\sqrt n)$?
If so, this could be considered a strengthening of the examples of graphs
with $\fix G=O(\sqrt n)$ given in \cite{Bose,merged,KPRSV}.
Are there graphs with, at least, $\free G=O(\sqrt n)$? If not,
by Theorem \ref{thm:fixfree} this would lead to an improvement of 
Bose et al.'s bound~\refeq{bose}.

\que
By Theorem \ref{thm:2trees}, we have $\free G\ge n/30$
for any graph $G$ with tree-width no more than 2.
One can also show that for Halin graphs, whose tree-width can attain 3,
we have $\free G\ge n/2$.
For which other classes of graphs do we have $\free G=\Omega(n)$
or, at least, $\lin G=\Omega(n)$? In particular, is $\lin G$ linear
for 2-outerplanar graphs? These graphs have tree-width at most 5,
and one can extend this question to planar graphs with tree-width bounded by a small
constant $t$. Corollary \ref{cor:btw} gives a negative answer if $t$ is sufficiently large.
Furthermore, what about planar graphs with bounded vertex degrees?
Note that the graphs constructed in the proof of Theorem \ref{thm:lin}
have vertices with degree more than $n^\delta$ for some $\delta>0$.

\que
In a recent paper \cite{CanoTU11},
Cano, T\'oth, and Urrutia improve the upper bound \refeq{fixupper}
to a bound of $O(n^{1/(3-\sigma)+\epsilon})$. Similarly to our proof of Theorem \ref{thm:lin},
their construction also uses iterative refinement
of faces of a planar triangulation.
It follows that our lower bound of $O(\sqrt n)$ in Corollary \ref{cor:2trees} 
cannot be extended to any class of planar graphs with bounded tree-width,
even to planar graphs of tree-width 8. If such extension is possible
for tree-width 3,4,\ldots is a natural open problem.

\que
Whether or not $\fit G=\lin G$ is an intriguing question.
Similarly to \refeq{fixfixl}, one can prove that $\fit G=\fitl G$,
where $\fitl G=\min_{X}\fitx G$ with the minimization over collinear $X$.
Thus, the question is actually whether or not
$\fitx G$ has the same value for all collinear $X$.
A similar question for $\fixx G$ is also open; it is posed in \cite[Problem 6.5]{KPRSV}.
\edit{
These questions have a strong flavour of \emph{morphing} issues, see, e.g., \cite{}.
}

\que
It is also natural to consider $\fitg G=\min_X\fitx G$, where the minimization goes over
all $X$ in general position. Can one extend our upper bound $\fit G=O(n^{0.99})$ to show
that $\fitg G=o(n)$ for infinitely many $G$?

\hide{
\que
Since grid graphs are almost layered, we have for them $\fix{P_s\times P_s}\ge\sqrt{n/2}$
where $n=s^2$.
How tight is this lower bound?
From \cite[Corollary 4.1]{Cib} we know that $\fix{P_s\times P_s}=O((n\log n)^{2/3})$.
}

\que
By slightly modifying the proof of Lemma \ref{lem:Gk}.2,
one can show that $c(G_k^*)\le(c(G_1^*)-1)^{k-1}c(G_1^*)$.
It follows that, for any cubic polyhedral graph $H$,
there exists a sequence of cubic polyhedral graph $H_1,H_2,\ldots$
such that $\lim_{k\to\infty}\frac{\log c(H_k)}{\log v(H_k)}\le
\frac{\log (c(H)-1)}{\log (v(H)-1)}<\frac{\log c(H)}{\log v(H)}$.
This readily implies two properties of the shortness exponent
for cubic polyhedral graphs, that seem to be unnoticed so far.
First, $\sigma=\inf_H\frac{\log c(H)}{\log v(H)}$ over all
cubic polyhedral $H$. Second, $\sigma$ cannot be attained
by the fraction $\frac{\log c(H)}{\log v(H)}$ for any particular~$H$.
It is interesting if this holds true for other families of graphs.

\end{document}